\definecolor{mypink}{rgb}{0.858, 0.188, 0.478}
\newcommand{\rev}[1]{{\color{black}{#1}}}
\providecommand{\algorithmname}{Algorithm}
\numberwithin{equation}{section}
\numberwithin{figure}{section}
\theoremstyle{plain}
\newtheorem{thm}{\protect\theoremname}[section]
\theoremstyle{definition}
\newtheorem{defn}[thm]{\protect\definitionname}
\theoremstyle{remark}
\theoremstyle{plain}
\newtheorem{lem}[thm]{\protect\lemmaname}
\newtheorem*{lem*}{Lemma}
\theoremstyle{remark}
\newtheorem{rem}[thm]{\protect\remarkname}
\theoremstyle{plain}
\theoremstyle{plain}
\newtheorem{proposition}[thm]{\protect\propositionname}
\providecommand{\claimname}{Claim}
\providecommand{\definitionname}{Definition}
\providecommand{\lemmaname}{Lemma}
\providecommand{\remarkname}{Remark}
\providecommand{\theoremname}{Theorem}
\providecommand{\corollaryname}{Corollary}
\providecommand{\propositionname}{Proposition}
\newcommand{\CN}{\mathbb{C}^N}
\newcommand{\hx}{\widehat{x}}
\newcommand{\hy}{\widehat{y}}
\newcommand{\I}{\iota}
\begin{document}


\title{ On Signal Reconstruction from FROG Measurements}


\author[add1]{Tamir Bendory}
\ead{tamir.bendory@princeton.edu}
\author[add2]{Dan Edidin}
\ead{edidind@missouri.edu}
\author[add3]{Yonina C. Eldar}
\ead{yonina@ee.technion.ac.il}

\address[add1]{The Program in Applied and Computational Mathematics,
	Princeton University, Princeton, NJ, USA}
\address[add2]{
	Department of Mathematics, University of Missouri, Columbia, Missouri, USA}
\address[add3]{The Andrew and Erna Viterbi Faculty of Electrical Engineering,
	Technion -- Israel Institute of Technology, Haifa, Israel}

{\let\thefootnote\relax\footnote{D. Edidin acknowledges support from Simons Collaboration Grant 315460. Y.C. Eldar acknowledges support from 
 the European Union's Horizon
2020 research and innovation program under grant agreement No. 646804--ERCCOG--BNYQ, and from the Israel Science Foundation under Grant no.
335/14.}}

\begin{keyword}
phase retrieval,   phaseless quartic system of equations, ultra-short laser pulse characterization, FROG
\end{keyword}

\begin{abstract}
Phase retrieval refers to recovering a signal from its Fourier magnitude. This problem arises naturally in many scientific applications, such as ultra-short laser pulse characterization and diffraction imaging.  Unfortunately,  phase retrieval is  ill-posed for almost all one-dimensional signals.   
 In order to characterize a laser pulse and overcome the ill-posedness, it is common to use a technique called Frequency-Resolved Optical
 Gating (FROG). In FROG, the measured data, referred to as FROG trace, is the Fourier magnitude of the product of the underlying signal with several translated versions of itself. The FROG trace results in a system of phaseless  quartic Fourier measurements.
In this paper, we prove that it suffices to consider only three translations of the signal to determine almost all bandlimited signals, up to trivial ambiguities.
 In practice, one usually also has access to the signal's Fourier magnitude. We show that in this case  only two translations suffice.  Our results significantly improve upon earlier work. 
\end{abstract}

\maketitle

\section{Introduction} \label{sec:introduction}

\emph{Phase retrieval} is the problem of  estimating a signal from its Fourier magnitude. This problem plays a key role in many scientific and engineering applications, among them X-ray crystallography, speech recognition, blind channel estimation, alignment tasks and astronomy \cite{millane1990phase,elser2017benchmark, rabiner1993fundamentals,baykal2004blind,fienup1987phase,bendory2017bispectrum}. Optical applications are of particular interest since optical devices, such as a charge-coupled device (CCD) and the human eye, cannot detect  phase information of the light wave~\cite{shechtman2015phase}.

Almost all one-dimensional signals cannot be determined uniquely from their Fourier magnitude. 
This immanent ill-posedness  makes this problem substantially more challenging than its  multi-dimensional counterpart, which is well-posed for almost all signals~\cite{bendory2017fourier}.
Two exceptions for one-dimensional signals that are determined uniquely from their Fourier magnitude are minimum phase signals and sparse signals with non-periodic support~\cite{huang2016phase,jaganathan2013sparse}\rev{; see also~\cite{beinert2015ambiguities}}. One popular way to overcome the non-uniqueness is by collecting additional information on the sought signal beyond its Fourier magnitude. For instance, this can be  done by taking multiple measurements, each one with a different known mask~\cite{candes2015phase,jaganathan2015phase,iwen2016phase}. An important special case   employs  shifted versions of a single mask. The acquired data is simply  the short-time Fourier transform (STFT) magnitude of the underlying signal.
It has been shown that  in many setups, this information is sufficient for  efficient and stable recovery~\cite{bendory2018non,jaganathan2016stft,eldar2015sparse,bojarovska2016phase,griffin1984signal,pfander2016robust}. 
For a recent survey of phase retrieval from a signal processing point--of--view, see~\cite{bendory2017fourier}.

 In this work, we consider an ultra-short laser pulse characterization method,
 called Frequency-Resolved Optical Gating (FROG).  FROG is a simple, commonly-used technique for full characterization of ultra-short laser pulses which enjoys good experimental performance
 \cite{trebino2012frequency,trebino1997measuring}. In order to characterize the signal,  the FROG method measures the Fourier magnitude of the product of the signal with a translated version of itself, for several different translations. The product of the signal with itself is usually performed using a second harmonic generation
 (SHG) crystal \cite{delong1994frequency}. The acquired data, referred to as \emph{FROG trace}, 
 is a quartic function of the underlying signal and can be thought of as
 \emph{phaseless quartic  Fourier measurements}.  We refer to the problem of recovering a signal from its FROG trace as the  \emph{quartic phase retrieval problem}. Illustration of the FROG setup is given in Figure~\ref{fig:frog}.
 
\begin{figure}
	\centering
	\includegraphics[scale=0.7]{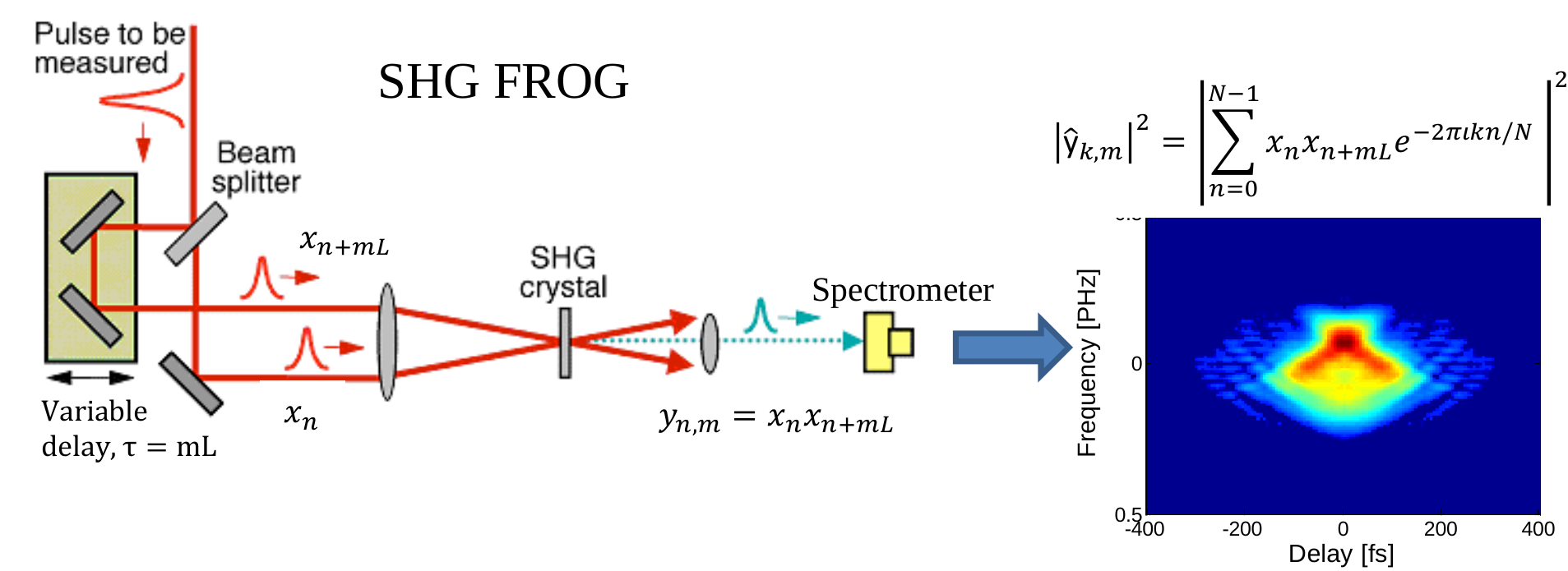}
	\label{fig:frog}
	\caption{ Illustration of the SHG FROG technique (courtesy of~\cite{bendory2017uniqueness}).}
\end{figure} 

In this paper we provide sufficient conditions on the number of samples required to determine a bandlimited signal uniquely, up to trivial ambiguities, from its FROG trace. Particularly, we show that it is sufficient to consider only three translations of the signal to determine  almost all bandlimited signals. 
 If one also measures the power spectrum of the signal, then  two translations suffice. 

The outline of this paper is as follows. In Section~\ref{sec:main_result} we formulate the FROG problem, discuss its ambiguities and present our main result. Proof of the main result is given in Section~\ref{sec:proof_main_result} and Section~\ref{sec:proof_supporting_lemmas} provides additional proofs for intermediate results. Section~\ref{sec:conclusion} concludes the paper and presents open questions.

Throughout the paper we use the following notation. We denote the Fourier transform of a signal $z\in \CN$ by $\hat{z}_k = \sum_{n=0}^{N-1}z_ne^{-2\pi\iota kn/N}$, where $\iota:=\sqrt{-1}$. We further use $\overline{z}$, $\Re\left\{z\right\}$ and $\Im\left\{z\right\}$ for its conjugate, real part and imaginary part, respectively.
We reserve $x\in\CN$ to be the underlying signal.  In the sequel, all signals are assumed to be periodic with period $N$ and all indices should be considered as modulo $N$, i.e., $z_n = z_{n+N\ell}$ for any integer $\ell\in\mathbb{Z}$.

\section{Mathematical formulation and main result} \label{sec:main_result}

The goal of this paper is to derive the minimal number of measurements required to determine a signal from its FROG trace. To this end, we  first formulate the FROG problem and  identify its symmetries, \rev{usually called \emph{trivial ambiguities} in the phase retrieval literature.} Then,  we introduce and discuss the  main results of the paper. 
 
 \subsection{The FROG trace}
   
 Let us define the signal 
\begin{equation} \label{eq:y_nm}
y_{n,m} = x_nx_{n+mL},
\end{equation}
where  $L$ is a fixed positive integer.
 The FROG trace is equal to the one-dimensional Fourier magnitude of $y_{n,m}$ for each fixed $m$, i.e., 
\begin{equation} \label{eq:frog_trace}
\begin{split}
\left\vert\hat{y}_{k,m} \right\vert^2 &=
\left\vert \sum_{n=0}^{N-1}x_nx_{n+mL}e^{-2\pi\iota nk/N }\right\vert^2, \\ \quad k=0,&\ldots,N-1, \quad m = 0,\ldots, \left\lceil N/L\right\rceil -1.
\end{split}
\end{equation}
To ease notation, we assume \rev{hereafter} that $L$ divides $N$. 

Our analysis of the FROG trace holds for bandlimited signals. Formally, we define  a bandlimited signal as follows:
\begin{defn} \label{def:bandlimited}
	We say that $x\in\mathbb{C}^N$ is a \emph{B-bandlimited signal} if its Fourier transform $\hx\in\mathbb{C}^N$ contains  $N-B $ consecutive zeros. \rev{That is, there exits $i$  such that $\hx_i=\ldots = \hx_{i+N-B-1}=0$, where all indices are taken modulo $N$.}
\end{defn}

The FROG trace is an intensity map $\CN\mapsto\mathbb{R}^{N\times \frac{N}{L}}$ that has three kinds of symmetries. These symmetries form the group of operations acting on the signal for which the intensity map is invariant. The FROG trace is invariant to global rotation, global translation and reflection~\cite{bendory2017uniqueness}. The first symmetry is continuous, while the latter two  \rev{generally} are discrete. 
 These symmetries are similar to
equivalent results in phase retrieval, see for instance~\cite{beinert2015ambiguities,bendory2017fourier}. For bandlimited signals, the global translation symmetry is also continuous:
\begin{proposition} \label{prop:ambiguities}
	Let ${x}\in\CN$ be the underlying signal and let $\hx\in\CN$ be its Fourier transform. Let $\left\vert\hat{y}_{k,m} \right\vert^2$ be the {FROG} trace of $x$ as defined in~\eqref{eq:frog_trace}  for some fixed $L$. Then, the following signals have the same FROG trace  as $x$:
	\begin{enumerate}
		\item the rotated signal $xe^{\iota\psi}$
		for some $\psi\in\mathbb{R};$
		\item the translated signal $x^{\ell}$ obeying $x^{\ell}_n = x_{n-\ell}$ for some $\ell\in\mathbb{Z}$ (equivalently, a signal  with Fourier transform $\hx^{\ell}$ obeying $\hx^{\ell}_k= \hx_k e^{-2\pi\I\ell k/N }$  for some $\ell\in\mathbb{Z}$)$;$
		\item  the reflected signal $\tilde x$ obeying $\tilde{x}_n = \overline{x_{-n}}$.
	\end{enumerate} 
 For B-bandlimited signals as in Definition~\ref{def:bandlimited} with $B\leq  N/2 $,  the translation ambiguity is continuous. Namely, any signal with a Fourier transform such that $\hx^{\psi}_k= \hx_k e^{\I \psi k }$ for some $\psi\in \mathbb{R}$ has the same FROG trace as $x$.
  
\begin{proof}
	The result for general signals was derived in~\cite{bendory2017uniqueness}. The result on the continuity of the translation symmetry for bandlimited signals is a direct corollary of Proposition~\ref{eq:pros_symmetry}, given in Section~\ref{sec:global_translation_ambiguity}.
\end{proof}
\end{proposition}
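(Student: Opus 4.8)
The plan is to verify the claimed invariance by directly computing the FROG trace of the modulated signal $x^\psi$, whose Fourier transform is $\hx^\psi_k=\hx_ke^{\I\psi k}$, and showing that its magnitude coincides with that of $x$. The starting point is the convolution theorem: since $y_{n,m}=x_nx_{n+mL}$ is a pointwise product, its Fourier transform is, up to the $1/N$ scaling, the circular convolution of $\hx$ with a modulated copy of itself,
\begin{equation*}
\hat{y}_{k,m}=\frac1N\sum_{j=0}^{N-1}\hx_j\,\hx_{k-j}\,e^{2\pi\I mL(k-j)/N},
\end{equation*}
where all indices are taken modulo $N$. First I would substitute $\hx^\psi$ for $\hx$ in this formula and track how the extra phases $e^{\I\psi j}$ and $e^{\I\psi(k-j)}$ combine.

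The central issue -- and the only place where the hypothesis $B\le N/2$ enters -- is that $e^{\I\psi k}$ is not $N$-periodic in $k$ unless $\psi\in(2\pi/N)\mathbb{Z}$, so the naive identity $e^{\I\psi j}e^{\I\psi(k-j)}=e^{\I\psi k}$ can fail once the index $k-j$ is reduced modulo $N$. To handle this cleanly I would fix, using Definition~\ref{def:bandlimited}, a block $S=\{i,\dots,i+B-1\}$ of genuine (non-reduced) integers on which $\hx$ is supported and define the modulated transform by $\hx^\psi_k=\hx_ke^{\I\psi k}$ for $k\in S$ and $\hx^\psi_k=0$ otherwise; since $\hx^\psi$ is supported on finitely many residues it is a well-defined element of $\CN$, and $x^\psi$ is a bona fide $N$-periodic signal even for irrational $\psi N/(2\pi)$. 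The key structural remark is that $\hx_j\hx_{k-j}$ is nonzero only when both $j$ and $k-j$ lie in $S$, so the convolution is supported on the block $\{2i,\dots,2i+2B-2\}$ of length $2B-1$. Because $B\le N/2$ forces $2B-1\le N-1<N$, these indices inject into $\mathbb{Z}/N$: the circular convolution agrees with the ordinary linear convolution, and no wraparound occurs.

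With the indices understood as these genuine integers, the phases telescope exactly, $e^{\I\psi j}e^{\I\psi(k-j)}=e^{\I\psi k}$, and factoring this out yields $\hat{y}^\psi_{k,m}=e^{\I\psi k}\hat{y}_{k,m}$ for every $k$ and $m$. Taking magnitudes gives $|\hat{y}^\psi_{k,m}|^2=|\hat{y}_{k,m}|^2$, which is precisely the assertion that $x^\psi$ has the same FROG trace as $x$; letting $\psi$ range over $\mathbb{R}$ then produces a continuous family, and specializing to $\psi=-2\pi\ell/N$ recovers the discrete translation ambiguity of part~(2).

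I expect the main obstacle to be the bookkeeping around the modular reduction: one must argue carefully that, under $B\le N/2$, every index appearing in the convolution for the modulated signal is represented by the genuine integers of $S$ and of $S+S$, so that the telescoping of the phase factors is legitimate and the circular convolution does not fold two distinct linear-convolution terms onto the same residue. Once this no-aliasing fact is in place, the remainder is a one-line factorization; this is also the content that the paper obtains as a corollary of Proposition~\ref{eq:pros_symmetry}.
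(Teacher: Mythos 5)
Your argument is correct and is essentially the paper's own proof of the continuous translation symmetry (Proposition~\ref{eq:pros_symmetry}): both write $\hat{y}_{k,m}$ as a convolution of $\hx$ with a modulated copy of itself, observe that $B\le N/2$ prevents wraparound so the sum runs over genuine integer indices with $p+q=k$, and conclude that the phases telescope to an overall factor $e^{\I\psi k}$ that disappears in the magnitude. Your treatment of the no-aliasing point is slightly more explicit than the paper's, and like the paper you leave the rotation and reflection symmetries to the cited prior work.
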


Figure~\ref{fig:symmetries} shows the  5-bandlimited  signal $x\in\mathbb{R}^{11}$ with Fourier transform given by $\hx = (1,\I,-\I,0 , 0,0,0,0,0,\I , -\I )^T$.
The second signal is $x$ shifted by three entries. 
A third signal is a  ``translated'' version of the underlying signal by 1.5 entries. Namely, the $k$th entry of its Fourier transform is $\hx_k e^{-2\pi \I(1.5) k/N}$. Clearly, the third signal is not a translated version of  $x$. Nonetheless, since $x$ is bandlimited, all three signals have the same FROG trace. If $x$ was not a bandlimited signal then the FROG trace of the \rev{latter} signal would not in general be equal to those of the first two.

\begin{figure}[h]
	\centering
	\includegraphics[scale=.8]{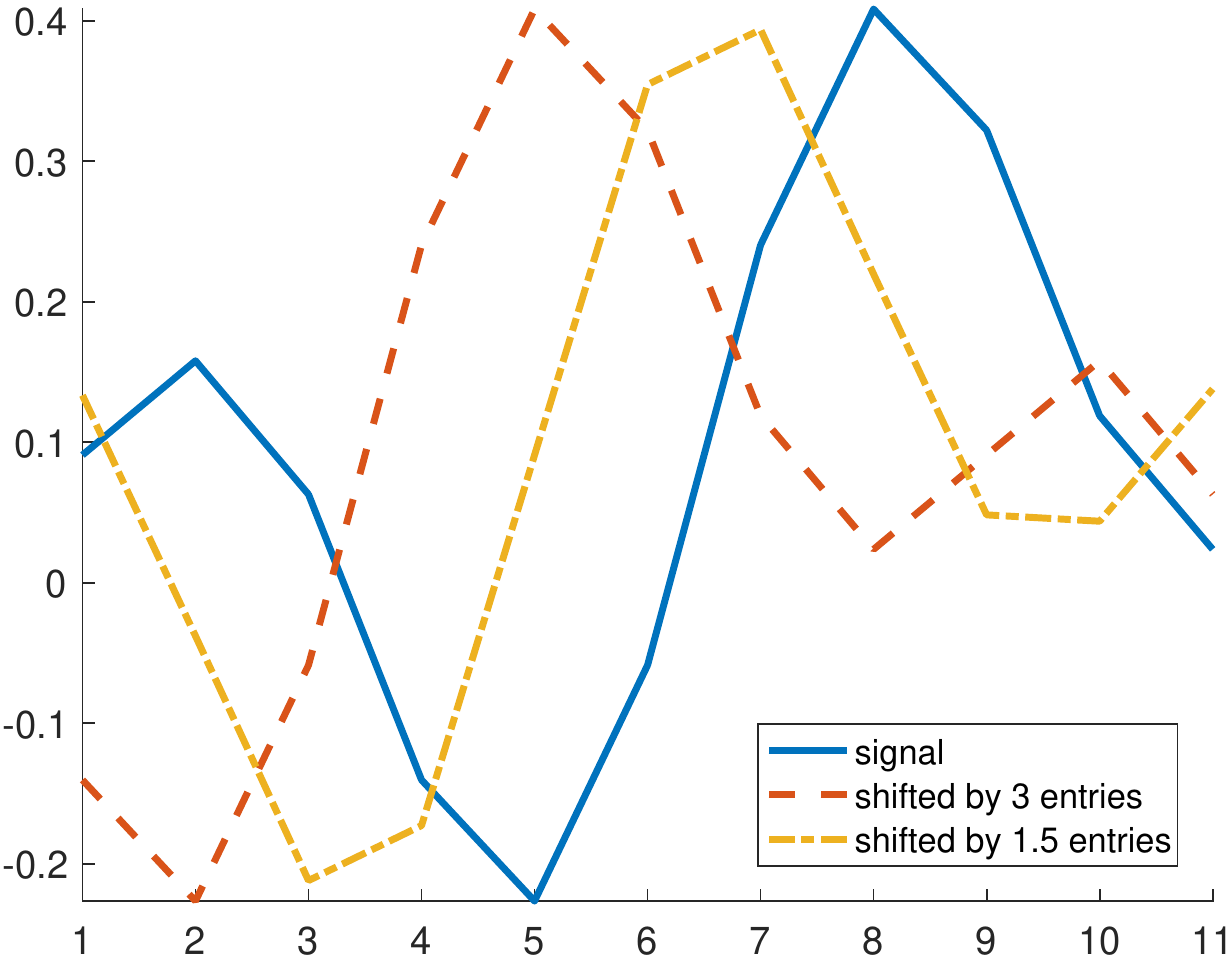}
	\caption{This figure presents three bandlimited signals in $\mathbb{R}^{11}$. Two of the them are shifted versions of each other. The third one is ``shifted'' by 1.5 entries, namely, the $k$th entry of its Fourier transform was modulated by $e^{-2\pi\I (1.5)k/N}$.  All signals have the same FROG trace.}
	\label{fig:symmetries}
\end{figure}

The symmetries of the FROG trace form a group. Namely, the \rev{FROG} intensity map $\CN\rightarrow\mathbb{R}^{N\times N/L}$ is invariant under the action of the group $G = S^1\times \mu_N \ltimes \mu_2$, where $\ltimes$ denotes a semi-direct product. The $\mu_2$ corresponds to  reflection symmetry and  $\mu_N$ corresponds to  translation ambiguity, which  rotates $\hx_k$ by  $e^{-2\pi\I\ell k /N}$  for some integer $\ell\in\mathbb{Z}$.
Observe that we use  a semi-direct product for the last symmetry since  $\mu_2$  and  $\mu_N$ do not commute; if one reflects the signal and then multiplies it by a power of $e^{2\pi\I/N}$, it is 
not the same  as multiplying by a power of $e^{2\pi\I/N}$ and then  reflecting.  In fact, the semi-direct product of $\mu_N$ and $\mu_2$ is the \emph{dihedral group} $D_{2N}$ of symmetries of the regular N-gon. 
If we consider bandlimited signals, then the FROG trace is invariant under the action of the group $G = S^1\times S^1 \ltimes \mu_2$. That is, the translation ambiguity is continuous.

\subsection{FROG recovery}

We are now ready to present the main result of this paper.
\rev{We assume throughout that the signal is bandlimited, which is  typically the case in standard ultra-short pulse characterization experiments~\cite{o2004practical}}.
We prove that almost any B-bandlimited signal is determined by its FROG trace, up to trivial ambiguities, as long as $L\leq N/4$. Particularly, we show that we need to consider  only three translations and hence    $3B$ measurements are enough to determine the underlying signal. For instance, if $L= N/4$ then the measurements corresponding to $m=0,1,2$ determine the signal.
If in addition we have access to the signal's power spectrum, then it suffices to choose $L\leq N/3$. In this case, one may consider only two translations. As an example, if $L= N/3$, then one can choose $m=0,1$ (see Remark~\ref{ren:reflect}).
The power spectrum
of the sought pulse  is often available, or can be measured by a spectrometer, which is integrated into a typical FROG device.   

\begin{thm} \label{th:main}
	Let $x$ be a $B$-bandlimited signal as defined in Definition~\ref{def:bandlimited} for some $B\leq  N/2$. If  $N/L\geq 4$, then generic signals are determined uniquely from their FROG trace as in~\eqref{eq:frog_trace}, modulo the trivial ambiguities (symmetries) of Proposition~\ref{prop:ambiguities}, from  $3B$  measurements.
	If in addition we have access to the signal's power spectrum and $N/L\geq 3$, then   $2B$ measurements are sufficient.
\end{thm}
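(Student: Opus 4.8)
The plan is to translate the FROG trace into a statement about the magnitudes of the coefficients of a one‑parameter family of polynomials, and then to recover the Fourier coefficients of $x$ one at a time by a boundary‑to‑interior induction. First I would use Proposition~\ref{prop:ambiguities} to normalize: after a (continuous) translation I may assume $\hx$ is supported on $\{0,\dots,B-1\}$, and write $P(z)=\sum_{k=0}^{B-1}\hx_k z^k$. A direct computation with the convolution theorem shows that, for each fixed $m$, the sequence $(N\hat y_{k,m})_k$ is exactly the coefficient sequence of $Q_m(z):=P(z)\,P(\omega_m z)$, where $\omega_m:=e^{2\pi\I mL/N}$; explicitly $N\hat y_{k,m}=\sum_{k'}\hx_{k'}\hx_{k-k'}\omega_m^{\,k-k'}$. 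The hypothesis $B\le N/2$ gives $2B-1\le N$, so no aliasing occurs and $Q_m$ has degree exactly $2B-2$. Hence the FROG trace $\{|\hat y_{k,m}|^2\}$ is precisely the knowledge of $|Q_{m,k}|$ for each coefficient index $k$ and each measured translation $m$.

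Next I would exploit the triangular structure in $k$. Using the remaining symmetries, I normalize the global phase and the residual translation so that $\hx_0,\hx_1>0$ (both nonzero generically). For $k\ge 2$ the only terms of $Q_{m,k}$ that involve the newest unknown $\hx_k$ are those with $k'\in\{0,k\}$, so
\[
Q_{m,k}=\hx_0\hx_k\,(1+\omega_m^{\,k})+R_{m,k},\qquad R_{m,k}=\sum_{k'=1}^{k-1}\hx_{k'}\hx_{k-k'}\omega_m^{\,k-k'}.
\]
By induction $R_{m,k}$ depends only on $\hx_0,\dots,\hx_{k-1}$ and is a known complex number. Expanding $|Q_{m,k}|^2$ yields, for each $m$, one real equation in $\hx_k=u+\I v$ whose quadratic part is $|\hx_0|^2|1+\omega_m^{\,k}|^2(u^2+v^2)$ and whose linear part is $2\,\Re[\hx_0\hx_k(1+\omega_m^{\,k})\overline{R_{m,k}}]$; set $w_m:=(1+\omega_m^{\,k})\overline{R_{m,k}}$.

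I would close the induction by solving this small system. If the power spectrum is known, the datum $|\hx_k|$ fixes $u^2+v^2$, so each translation contributes one linear equation $\Re[\hx_k\,w_m]=\text{const}$; two translations with $\arg w_0\neq\arg w_1\ (\mathrm{mod}\ \pi)$ then pin down $\arg\hx_k$, which yields the claim that $N/L\ge 3$ (two translations, hence $2B$ measurements) suffices. Without the power spectrum I treat $(u^2+v^2,u,v)$ as three affine unknowns; three translations $m=0,1,2$ give three equations whose $3\times3$ coefficient matrix is generically invertible, determining $\hx_k$ outright, matching $N/L\ge 4$ and the count of $3B$ measurements (indices $k=0,\dots,B-1$ for each of three translations), with $|\hx_0|,|\hx_1|$ read off from $|Q_{m,0}|,|Q_{m,1}|$. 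At the base case $k=2$ one has $R_{m,2}=\hx_1^2\omega_m$, which makes every $w_m$ real; the systems are therefore rank‑deficient in exactly the $v=\Im\hx_2$ direction, and the undetermined sign of $\Im\hx_2$ is the single reflection ambiguity of Proposition~\ref{prop:ambiguities}. Choosing it fixes a representative, and one checks that conjugating the whole solution ($\arg\hx_k\mapsto-\arg\hx_k$) realizes exactly this reflection, so the two global branches are genuine symmetry copies.

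The main obstacle is the genericity bookkeeping in the last step: I must show the $2\times2$ (resp.\ $3\times3$) systems are nonsingular off a measure‑zero set of signals. This requires controlling the arithmetic factors $1+\omega_m^{\,k}$ (nonvanishing, guaranteed by the lower bound on $N/L$ together with the choice $m=0$), the distinctness of the values $\cos(2\pi mL/N)$, and the generic variation of $\arg R_{m,k}$ with $m$ for $k\ge 3$. I would package these as a finite list of nonvanishing polynomial/real‑analytic conditions on $\hx$ (and on the fixed integers $N,L$), whose complement is the claimed generic set, and then confirm that the exceptional loci on which some equation degenerates are either empty under $N/L\ge 4$ (resp.\ $\ge 3$) or are absorbed into the reflection degeneracy already present at $k=2$.
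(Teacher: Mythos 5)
Your overall architecture matches the paper's: rewrite the trace as $N\hat y_{k,m}=\sum_{\ell}\hx_\ell\hx_{k-\ell}\omega_m^{(\cdot)}$, exploit the triangular (``pyramid'') structure in $k$, burn the rotation/translation/reflection symmetries on $\hx_0,\hx_1,\Im\hx_2$, and then determine each new $\hx_k$ from a small system of circle equations whose nonsingularity is a generic polynomial condition. That is exactly the paper's recursion, and your $3\times 3$ affine system in $(u^2+v^2,u,v)$ is equivalent, after eliminating the quadratic unknown, to the paper's Lemma on $\vert z+v_1\vert=n_1,\ \vert z+v_2\vert=n_2,\ \vert z+v_3\vert=n_3$ with the criterion $\Im\{(v_1-v_2)/(v_1-v_3)\}\neq 0$.

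However, there is a genuine gap at $k=3$, where your claim of ``generic variation of $\arg R_{m,k}$ with $m$ for $k\ge 3$'' is false. For $k=3$ one has $R_{m,3}=\hx_1\hx_2(\omega_m+\omega_m^2)$, hence
\[
w_m=(1+\omega_m^3)\overline{R_{m,3}}=\bigl(\omega_m^{-1}+\omega_m^{-2}+\omega_m+\omega_m^{2}\bigr)\hx_1\overline{\hx_2},
\]
and the scalar in parentheses is real for every $m$. So all the $w_m$ lie on one line through the origin \emph{for every signal}, the linear parts of your equations only see $\Re\bigl(\hx_3 e^{-\I\theta}\bigr)$ with $\theta=\arg\hx_2$, and the system is structurally (not just non-generically) rank-deficient: it leaves two candidates, $\hx_3$ and $\overline{\hx_3}e^{2\I\theta}$. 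Unlike the sign of $\Im\hx_2$, this two-fold ambiguity is \emph{not} absorbed by the reflection symmetry --- global conjugation would also flip $\hx_2$, which you have already pinned down --- so the spurious candidate corresponds to a genuinely different signal. The paper closes this by a separate lemma (its Section on determining $\hx_3$ uniquely) showing that substituting $\overline{\hx_3}e^{2\I\theta}$ makes the $k=4$ system inconsistent for generic signals, which requires exhibiting a nonzero polynomial obstruction (verified by evaluating at an explicit point). Your induction needs this forward-looking consistency step, or some substitute for it, before it can proceed past $k=3$. Two smaller points: the nonsingularity for $k\ge 4$, which you defer to ``genericity bookkeeping,'' is itself a nontrivial computation (the paper's ratio lemma must show a specific quartic imaginary part is not identically zero); and the factors $1+\omega_m^{k}$ can vanish for $m\neq 0$ (e.g.\ $r=4$, $k=2$, $m=1$), so the set $\{0,1,2\}$ of translations must occasionally be replaced by other representatives rather than being controlled ``by the choice $m=0$'' alone.
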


By the notion {\em generic}, we mean that the set of signals which cannot be uniquely determined, up to trivial ambiguities, is contained in the vanishing locus of a nonzero polynomial. This implies that we can reconstruct almost all signals.

This result significantly improves upon earlier work  on the uniqueness of the FROG method. In~\cite{seifert2004nontrivial} it was shown that a continuous signal is determined by its continuous FROG trace and its power spectrum. The uniqueness of the discrete case, as the problem appears in practice, was first considered in~\cite{bendory2017uniqueness}. It was proven that a discrete bandlimited signal is determined by all $N^2$ FROG measurements (i.e., $L=1$) and the signal's power spectrum. Our result requires only $2B$ FROG measurements if the signal's power spectrum is available, where $B$ is the bandlimit. Furthermore, this is the first result showing that the FROG trace is sufficient to determine the signal \rev{even} without the power spectrum information.

It is interesting to view our results in the broader perspective of nonlinear phaseless systems of equations. In~\cite{conca2015algebraic}, it was shown that  $4N-4$ quadratic equations arising from random frame measurements are sufficient to uniquely determine \emph{all} signals. 
Another related setup is the phaseless STFT measurements. This case resembles the FROG setup, where a known reference window replaces the unknown shifted signal.
Several works derived uniqueness results for this case under different conditions~\cite{bendory2018non,eldar2015sparse,bojarovska2016phase,griffin1984signal}. In~\cite{jaganathan2016stft} it was shown that, roughly speaking, it is sufficient to set $L\approx N/2$  (namely, $2N$ measurements) to determine almost all non-vanishing signals. Comparing to Theorem~\ref{th:main}, we conclude that, maybe surprisingly, the FROG case is not significantly harder than the phaseless STFT setup.

Before moving forward to the proof of the main result, we mention that  several algorithms exist to estimate a signal from its FROG trace \cite{trebino1993using,sidorenko2016ptychographic}.  One popular
iterative algorithm is the principal components generalized
projections (PCGP) method~\cite{kane2008principal}. In each iteration, PCGP
performs principal components analysis (PCA) on a data matrix constructed by the  previous estimation. Another approach may be to minimize the least-squares loss function:
\begin{equation} \label{eq:ls}
\min_{z\in\CN} \frac{1}{2} \sum_{k=0}^{N-1} \sum_{m=0}^{N/L-1} \left( \left\vert\hat{y}_{k,m} \right\vert^2 -
\left\vert \sum_{n=0}^{N-1}z_nz_{n+mL}e^{-2\pi\iota nk/N }\right\vert^2
\right)^2.
\end{equation}
The loss function~\eqref{eq:ls} is a smooth function -- a polynomial of degree eight in $z$ -- and therefore can be minimized by standard gradient techniques. However, as the function is nonconvex, it is likely that the algorithm will  converge to a local minimum rather than to the global minimum. 

Figure~\ref{fig:tr} examines  \rev{the numerical properties of minimizing the loss function~\eqref{eq:ls} by a trust-region algorithm using the optimization toolbox Manopt~\cite{boumal2014manopt}}. The underlying signal $x\in\mathbb{R}^{24}$ was drawn from a normal i.i.d.\ distribution with mean zero and variance one. The algorithm was 
initialized from the point $x_0 = x + \sigma \zeta $, where $\sigma$ is a fixed constant and  $\zeta$ takes the values $\{-1,1\}$ with equal probability. Clearly, for $\sigma=0$, $x_0=x$ and therefore any method will succeed. We examined the empirical success for varying values of $\sigma $ and $L$. 
\rev{ As can be seen, even with $L=1$, the iterations do not always  converge from an arbitrary initialization. 
This experiment underscores the challenge in recovering the signal, even in situations where uniqueness is guaranteed.
That being said,}
the results also suggest that for low values of $L$, namely, large redundancy in the measurements, the algorithm often converges to the global minimum even when it is initialized fairly far away. In other words, the  basin of attraction of~\eqref{eq:ls} is not too small. 
Theoretical analysis of the basin of attraction in the related  problem of  random systems of quadratic equations was performed in~\cite{candes2015Wirtinger,chen2017solving,sun2016geometric,wang2017solving}. Recently, nonconvex methods for Fourier phase retrieval, accompanied with theoretical analysis, were proposed in~\cite{bendory2018non,iwen2016phase,pfander2016robust}.
However, in the FROG setup the problem is quartic rather than quadratic.

\begin{figure}[h]
	\centering
	\includegraphics[scale=.8]{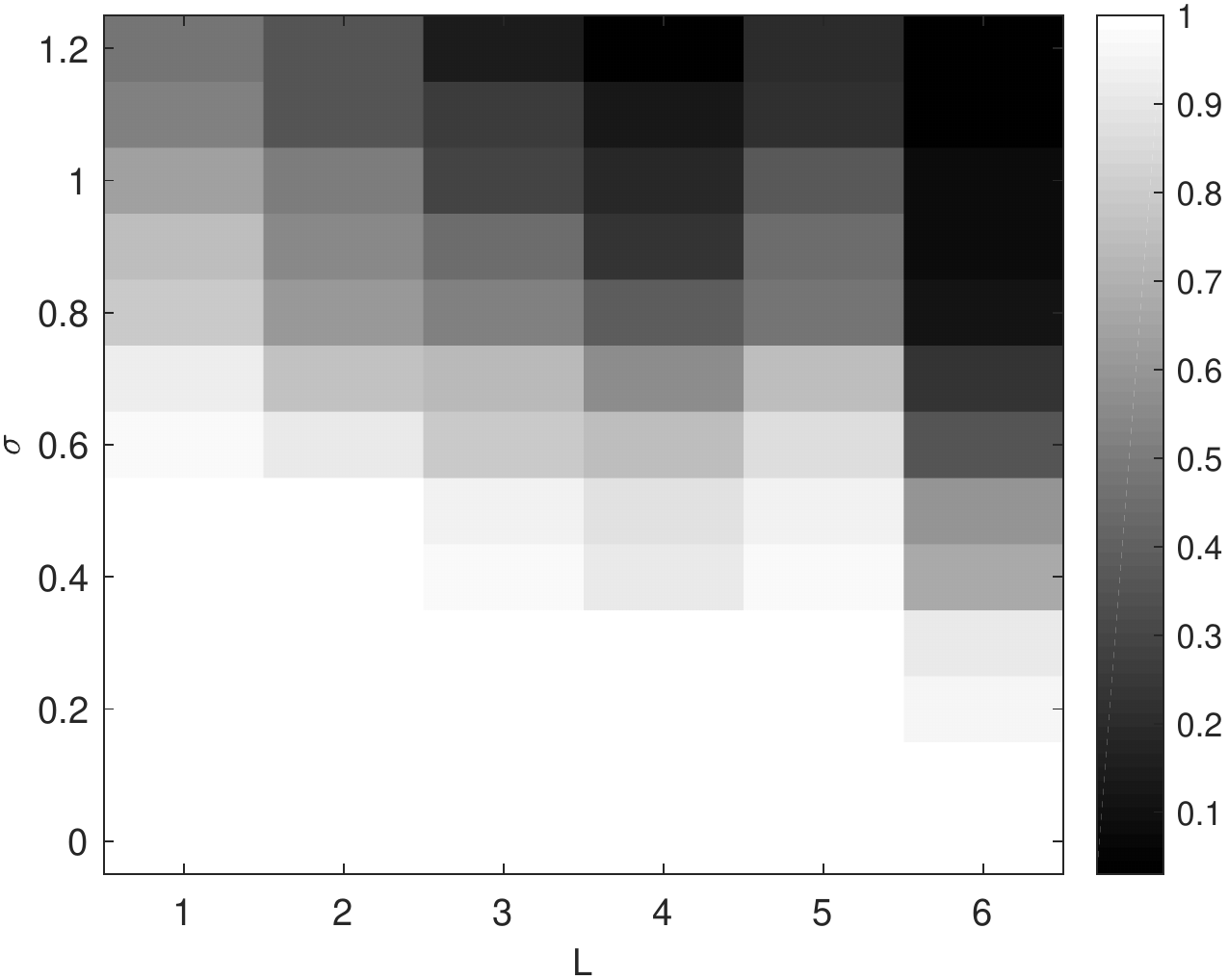}
	\caption{The empirical success recovery of a trust-region algorithm to minimize~\eqref{eq:ls} as a function of $L$ and $\sigma$ (100 experiments for each pair of values). A success was declared for recovery error less than~$10^{-6}$. 
	}
	\label{fig:tr}
\end{figure}


\section{Proof of Main Result}\label{sec:proof_main_result}

\subsection{Preliminaries}

We begin the proof by reformulating the measurement model to a more convenient structure. Applying the inverse Fourier transform we  write  $x_n=\frac{1}{N}\sum_{k=0}^{N=1}\hx_ke^{2\pi\I kn/N}$.
 Then, according to~\eqref{eq:y_nm}, we have
\begin{align*} 
\hat{y}_{k,m} &=
 \sum_{n=0}^{N-1}x_nx_{n+mL}e^{-2\pi\I kn/N } \nonumber  \\
 & = {\frac{1}{N^2}} \sum_{n=0}^{N-1}\left( \sum_{\ell_1 =0}^{N-1}\hx_{\ell_1}e^{2\pi\I\ell_1n/N} \right) \left( \sum_{\ell_2 =0}^{N-1}\hx_{\ell_2}e^{2\pi\I\ell_2n/N} e^{2\pi\I\ell_2mL/N} \right)e^{-2\pi\I k n/N} \nonumber \\
 & = {\frac{1}{N^2}} \sum_{\ell_1,\ell_2 =0}^{N-1} \hx_{\ell_1}\hx_{\ell_2}e^{2\pi\I\ell_2mL/N} \sum_{n=0}^{N-1} e^{-2\pi\I(k-\ell_1-\ell_2)n/N}.
\end{align*}
Since the later sum is equal to $N$ if  $k = \ell_1 + \ell_2$ and zero otherwise, we  get 
\begin{align} \label{eq:y_hat}
\hat{y}_{k,m} 
 = {\frac{1}{N}} \sum_{\ell=0}^{N-1} \hx_{\ell} \hx_{k-\ell}e^{2\pi\I\ell mL/N} = {\frac{1}{N}} \sum_{\ell=0}^{N-1} \hx_{\ell} \hx_{k-\ell}\omega^{\ell m},
\end{align}
where 
\begin{equation} \label{eq:omega}
\omega:=e^{2\pi\I/r}, \quad r:=N/L,
\end{equation}
and we assume that $N/L$ is an integer.
Equation~\eqref{eq:y_hat} implies that, for each fixed $k$, $\hat{y}_{k,m}$ provides $r = N/L$ samples from the (inverse) Fourier transform of $\hx_{\ell}\hx_{k-\ell}$. 

\begin{rem} \label{ren:reflect}
Note that   $\overline{\hat{y}}_{k,-m} = \sum_{\ell=0}^{N-1} \overline{\hx}_{\ell} \overline{\hx}_{k-\ell}\omega^{\ell m} $. Because of the reflection ambiguity in Proposition~\ref{prop:ambiguities}, it implies that the FROG trace is invariant to sign flip of $m$. For instance, for $r=3$, the equations for $m=1$ and $m=2$ will be the same since $m=2$ is equivalent to $m=-1$. 
\end{rem}

%

The  proof is based on recursion. We begin by showing explicitly how the first entries of $\hx$ are determined from the FROG trace. Then, we will show that the knowledge of the first $k$ entries of $\hx$ and the FROG trace is enough to determine the $(k+1)$th entry uniquely. Each recursion step is based on the results summarized in the following lemma. The lemma identifies  the number of solutions of a  system with three  phaseless equations.

\begin{lem} \label{lem:norm_linear_comb}
 	Consider the system of equations
 	\begin{equation} \label{eq:sys_eq}
 	\vert z+v_1\vert = n_1,\quad \vert z+v_2\vert = n_2,\quad \vert z+v_3\vert = n_3,
 	\end{equation}  
 	for nonnegative $n_1,n_2,n_3\in\mathbb{R}$. 
 	\begin{enumerate}
 		\item \label{lem:conda} Let $v_1,v_2,v_3\in\mathbb{C}$ be distinct and suppose that  $\Im\left\{ \frac{v_1-v_2}{v_1-v_3}  \right\}\neq 0$.  If  the system~\eqref{eq:sys_eq} has a solution, then it is unique. Moreover, if $n_1,n_2,n_3$ are fixed for generic $v_1,v_2,v_3\in \mathbb{C}$ then the system will have no solution.
 		\item \label{lem:condb} Let $v_1,v_2,v_3\in\mathbb{R}$. If $z=a+\I b$ is a solution, then $\overline{z}=a-\I b$ is a solution as well. Hence, if the system has a solution, then it has two solutions. Moreover, if $n_1,n_2,n_3$ are fixed for generic $v_1,v_2,v_3\in\mathbb{R}$ then the system will have no solution.
 		\end{enumerate}
\begin{proof}
	See Section~\ref{sec:proof_lem_norm_linear}.
\end{proof}
\end{lem}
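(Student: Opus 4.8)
The plan is to read the three equations geometrically. Writing $z=x+\iota y$ and $v_j=p_j+\iota q_j$ with real $x,y,p_j,q_j$, the equation $\lvert z+v_j\rvert=n_j$ says that $z$ lies on the circle of radius $n_j$ centered at $-v_j$. Squaring gives $x^2+y^2+2p_j x+2q_j y+(p_j^2+q_j^2-n_j^2)=0$, and subtracting the $j=1$ equation from the $j=2,3$ equations cancels the quadratic term $x^2+y^2$, leaving two \emph{linear} equations in $(x,y)$ whose coefficient matrix is $\left(\begin{smallmatrix} p_1-p_2 & q_1-q_2 \\ p_1-p_3 & q_1-q_3\end{smallmatrix}\right)$. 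This reduction is the backbone of both parts.

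For part (\ref{lem:conda}), I would first observe that the determinant of this matrix equals $-\Im\{(v_1-v_2)\overline{(v_1-v_3)}\}$, which is a positive real multiple of $\Im\{(v_1-v_2)/(v_1-v_3)\}$; hence the hypothesis is exactly the statement that the linear system is nonsingular. The two difference equations then have a unique solution $(x,y)$, and since every solution of \eqref{eq:sys_eq} must satisfy these difference equations, the original system has at most one solution, giving uniqueness. For the genericity claim I would solve the linear system to express this candidate $(x,y)$ as a rational function of $v_1,v_2,v_3$ (with $n_1,n_2,n_3$ fixed), substitute it back into the single equation $\lvert z+v_1\rvert^2=n_1^2$, and clear denominators to obtain one polynomial $P(v_1,v_2,v_3)$ whose vanishing is necessary for a solution to exist. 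It then suffices to check $P\not\equiv 0$, which I would do by exhibiting one admissible configuration with non-collinear centers (e.g.\ radii chosen too small for the three circles to concur); the set of $v$'s admitting a solution is then contained in the proper subvariety $\{P=0\}$.

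For part (\ref{lem:condb}), the $v_j$ are real, so $q_j=0$ and the three centers $-v_j$ are collinear, which is precisely why the hypothesis of (\ref{lem:conda}) fails. The conjugation symmetry is immediate: since $v_j\in\mathbb{R}$, $\lvert\bar z+v_j\rvert=\lvert\overline{z+v_j}\rvert=\lvert z+v_j\rvert$, so $\bar z$ solves \eqref{eq:sys_eq} whenever $z$ does, and a non-real solution $z=a+\iota b$ is accompanied by the distinct solution $\bar z$, yielding two. To count solutions and to obtain the genericity statement, I would note that with $q_j=0$ the difference equations involve $x$ alone, so each difference pins down $x$; their agreement is a single polynomial condition on $v_1,v_2,v_3$, and when it holds $x$ is determined, after which $y^2=n_1^2-(x+p_1)^2$ is determined and yields $y=\pm\sqrt{\,\cdot\,}$, i.e.\ exactly the two conjugate solutions. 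Generic real $v$'s violate the agreement condition, so no solution exists, once I again verify the relevant polynomial is not identically zero via one explicit configuration.

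The main obstacle I anticipate is the genericity bookkeeping rather than the algebra: one must ensure that the polynomial conditions produced by back-substitution are genuinely not identically zero (so their zero loci are proper), and that the notion of genericity in use—containment in the vanishing locus of a nonzero polynomial—is applied with the $n_j$ fixed and the $v_j$ varying. A cleaner alternative for the ``no solution for generic $v$'' statements is a dimension count on the incidence variety $\{(z,v_1,v_2,v_3) : \lvert z+v_j\rvert=n_j\}$: fixing $z$ leaves each $v_j$ free on a circle, so this real variety has dimension $2+1+1+1=5$, whereas the $v$-space has dimension $6$; the projection to $v$ therefore cannot be dominant, and its image lies in a proper subvariety, which is exactly the desired conclusion.
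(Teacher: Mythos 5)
Your proposal is correct and follows essentially the same route as the paper: subtract pairs of squared equations to obtain a linear system in $(\Re\left\{z\right\},\Im\left\{z\right\})$ whose nondegeneracy is exactly $\Im\left\{ \frac{v_1-v_2}{v_1-v_3} \right\}\neq 0$, and for real centers use the conjugation symmetry $(a,b)\mapsto(a,-b)$ together with the linear equations in $a$. Moreover, the ``cleaner alternative'' you mention for the genericity claims---the dimension count showing the incidence variety has dimension $5$, strictly less than the $6$-dimensional space of $(v_1,v_2,v_3)$---is precisely the argument the paper itself uses.
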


The notion of generic signals refers here to a set of signals that are not contained in the zero set of some nonzero polynomial in the real and complex parts of each component. Consequently, since the zero set of a polynomial has strictly smaller dimension than the polynomial, this means that the set of signals failing to satisfy the conclusion of the theorem will necessarily have measure zero. 

Note that Lemma~\ref{lem:norm_linear_comb} can be extended to systems of $s\geq 3$ equations, i.e., $$\vert z+v_1\vert = n_1,\dots,\vert z+v_s\vert=n_s.$$
If one of the ratios $\frac{v_1-v_p}{v_1-v_q}$ for $p,q=2,\dots,s,\thinspace p\neq q$,  is not real then there is at most one solution to the system. 

\subsection{Proof of Theorem~\ref{th:main}}

Equipped with Lemma~\ref{lem:norm_linear_comb},  we  move forward to the proof of Theorem~\ref{th:main}.  To ease notation, \rev{we assume $B=N/2$},  $N$ is even, that $\hx_k\neq 0 $ for $k=0,\dots,N/2-1,$ and that $\hx_k =  0 $ for $k=N/2,\dots,N-1$. 
 If the signal's nonzero Fourier coefficients are not in the interval $0\dots,N/2-1$, then we can cyclically reindex the signal without affecting the proof. If $N$ is odd, then one should replace $N/2$ by $\lfloor N/2\rfloor$ everywhere in the sequel. \rev{Clearly, the proof carries through for any $B\leq N/2$.}
 
Considering~\eqref{eq:y_hat}, our bandlimit assumption \rev{on the signal} forms a ``pyramid'' structure. Here, each row represents fixed $k$ and varying $\ell$ of $\hx_\ell\hx_{k-\ell}$ \rev{for $k,\ell = 0,\ldots N/2-1$}: 
\begin{equation} \label{eq:pyramid}
\begin{array}{c} 
\hx_0^2, 0, \ldots , 0\\
\hx_0\hx_1, \hx_1 \hx_0, 0, \ldots , 0 \\
\hx_0 \hx_2, \hx_1^2, \hx_2 \hx_0, \ldots , 0
\ldots\\
\vdots\\
\hx_{{N/2}-1}\hx_0 , \hx_{{N/2}-2}\hx_1 , \ldots , \hx_{{N/2}-1}\hx_0,0, \ldots , 0\\
0, \hx_1 \hx_{{N/2}-1}, \hx_2 \hx_{{N/2}-2}, \ldots , \hx_{{N/2}-1}\hx_1,0, \ldots ,0\\
\vdots\\
0, 0, \ldots  \hx_{{N/2-1}}\hx_{{N/2-1}}, \ldots, 0, \ldots ,0.
\end{array}
\end{equation}
Then, $\hy_{k,m}$ as in~\eqref{eq:y_hat} is a subsample 
of the Fourier transform of each one of the pyramid's rows.

From the first row of~\eqref{eq:pyramid}, we see that
\begin{equation*}
\vert\hy_{0,0}\vert = \frac{1}{N} \vert\hx_0^2\vert.
\end{equation*}
Because of the \rev{continuous} rotation ambiguity, we set $\hx_0$ to be real and, without loss of generality, normalize it so that  \rev{$N\vert\hy_{0,0}\vert = \hx_0=1$}. From the second row of~\eqref{eq:pyramid}, we conclude that  
\begin{equation*}
\vert\hy_{1,0}\vert = \frac{1}{N} \vert\hx_0\hx_1+\hx_1\hx_0\vert = \frac{2}{N}\vert\hx_1\vert.
\end{equation*}
Therefore, we can determine $\vert\hx_1\vert$. Because of the \rev{continuous} translation ambiguity for bandlimited signals (see Proposition~\ref{prop:ambiguities}), we can set arbitrarily $ \hx_1 = \vert\hx_1\vert $. Note that this is not true for general signals, where the translation ambiguity is discrete.
 
Our next step is to determine  $\hx_2$ by solving the system for $k=2$. We denote the unknown variable by $z$. In this case, we obtain the system of equations for $m=0,\ldots,r-1$:
\begin{equation} \label{eq:x2}
\vert \hy_{2,m}\vert = \frac{1}{N}\left\vert \left(1+\omega^{2m}\right)z + \omega^m\hx_1^2 \right\vert,
\end{equation}
where $\omega$ is given in~\eqref{eq:omega}. Note that for $m = (2\ell+1)r/4$ for some integer $\ell\in\mathbb{Z}$, we get $\omega^{2m}=-1$ so that the system degenerates. If $r = N/L \geq 3$ then we can eliminate these equations and  the system
\begin{equation*}
\frac{\vert \hy_{2,m}\vert}{\left\vert 1+\omega^{2m}\right\vert} = \frac{1}{N}\left\vert z + \frac{\omega^m}{1+\omega^{2m}}\hx_1^2 \right\vert,
\end{equation*}
  still has at least two distinct equations.
It is easy to see that since $\vert \omega \vert  =1$,   $\frac{\omega^m}{1+\omega^{2m}}=\frac{\omega^{-m}}{1+\omega^{-2m}}$ so that this term is  self-conjugate and hence real. Since $z = \hx_2$ is a solution, by the second part of Lemma~\ref{lem:norm_linear_comb}, we conclude that the system has two conjugate solutions $z$  and $\overline{z}$, \rev{corresponding} to the reflection symmetry of Proposition~\ref{prop:ambiguities}. Hence, we fix $\hx_2$ to be one of these two conjugate solutions. 

Fixing $\hx_0,\hx_1,\hx_2$ up to symmetries, we move forward to determine $\hx_3$.  For $k=3$, we get  the system of equations for  $m=0,\dots,r-1,$
\begin{equation*}
\vert \hy_{3,m}\vert = \frac{1}{N}\left\vert z+ \omega^m\hx_1\hx_2 + \omega^{2m}\hx_2\hx_1 +\omega^{3m}z \right\vert.
\end{equation*}
As in the previous case,  for $m = (2\ell+1)r/6$ for some integer $\ell\in\mathbb{Z}$  we have $\omega^{3m} = -1$. 
In the rest of the cases, we reformulate the equations as
\begin{equation} \label{eq:x2}
\frac{\vert \hy_{3,m}\vert}{\vert 1+\omega^{3m} \vert } = \frac{1}{N}\left\vert z+ \frac{(\omega^m+\omega^{2m})}{1+\omega^{3m}} \hx_1\hx_2 \right\vert.
\end{equation}
Again, since $\vert \omega\vert=1 $, $\frac{\omega^m+\omega^{2m}}{1+\omega^{3m}}$ is self conjugate and hence real.
Let us denote $\hx_2 = \vert\hx_2\vert e^{\I\theta}$ and divide by $e^{\I\theta}$  to  obtain  
\begin{equation} \label{eq:x2_nor}
\frac{\vert \hy_{3,m}\vert}{\vert 1+\omega^{3m} \vert } = \frac{1}{N}\left\vert ze^{-\I\theta}+ \frac{(\omega^m+\omega^{2m})}{1+\omega^{3m}} \hx_1\vert \hx_2\vert  \right\vert.
\end{equation}
 Since we set $\hx_1$ to be real, this is a system of the form of the second part of Lemma~\ref{lem:norm_linear_comb}, having two conjugate solutions. Denote these solutions  by $z_1,z_2$ and recall that the candidate solutions for~\eqref{eq:x2} are $z_1e^{\I\theta}$ and $z_2e^{\I\theta}$. Since  $ \hx_3$ is a solution to~\eqref{eq:x2}, $z_1 = \hx_3e^{-\I\theta}$ is one solution. The second solution is given by $z_2 = \overline{z_1}=\overline{\hx_3}e^{\I\theta}$. Therefore, we conclude that $\overline{\hx_3}e^{2\I\theta}$ is a second potential solution to~\eqref{eq:x2}.
  So, currently we have two candidates for $\hx_3$. Next, we will determine $\hx_4$ uniquely and show that  $\overline{\hx}_3e^{2\I\theta}$  is inconsistent with the data. This will determine $\hx_3$ uniquely.

For $\hx_4$ and by eliminating the case of $m = r(2\ell+1)/8$ for an integer $\ell\in\mathbb{Z}$ (namely, $\omega^{4m} = -1$), we get the system for $m = 0,\dots,r-1,$
\begin{equation} \label{eq:x4}
\frac{\vert \hy_{4,m}\vert}{\vert 1+\omega^{4m} \vert } = \frac{1}{N}\left\vert z+ \frac{(\omega^m+\omega^{3m})\hx_1\hx_3}{1+\omega^{4m}} +\frac{\omega^{2m}\hx_2^2}{1+\omega^{4m}} \right\vert.
\end{equation}
To invoke Lemma~\ref{lem:norm_linear_comb}, we need three equations. In general, it is most convenient to choose $m=0,1,2$. If one of these values satisfy $\omega^{4m} = -1$, then we may always choose another value. Note that for $r=3,4$, which are of particular interest, $\omega^{4m}\neq -1$.  The following lemma paves the way to determining $\hx_4$ uniquely:
\begin{lem} \label{lemma.ratios}
	Let $v_m = \frac{(\omega^m+\omega^{3m})\hx_1\hx_3}{1+\omega^{4m}} +\frac{\omega^{2m}\hx_2^2}{1+\omega^{4m}} $. 
	If $r = N/L\geq 4$, then for generic $\hx_1,\hx_2,\hx_3$ the ratio $\frac{v_0-v_q}{v_0-v_p}$ is not real for some distinct $p,q \in \{1,\dots,r-1\}$ with $p +q \neq r$.
\end{lem}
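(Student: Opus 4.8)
The plan is to separate the complex unknowns from the arithmetic of the roots of unity, so that non-reality of the ratio becomes the product of a purely combinatorial nonvanishing condition and a generic condition on $\hx_1,\hx_2,\hx_3$. First I would simplify the coefficients of $v_m$. Writing $\theta_m := 2\pi m/r$ and using $\vert\omega\vert=1$ (the same self-conjugacy already exploited for $\hx_2$ and $\hx_3$), a direct computation gives $\frac{\omega^m+\omega^{3m}}{1+\omega^{4m}} = \frac{\cos\theta_m}{\cos 2\theta_m}$ and $\frac{\omega^{2m}}{1+\omega^{4m}} = \frac{1}{2\cos 2\theta_m}$, both real. Hence $v_m = A_m\,\hx_1\hx_3 + B_m\,\hx_2^2$ with $A_m,B_m\in\reals$, and setting $u:=\hx_1\hx_3$, $w:=\hx_2^2$, $a_m := A_0-A_m$, $b_m := B_0-B_m$, we obtain $v_0-v_m = a_m u + b_m w$ with $a_m,b_m\in\reals$.

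Next I would record the reality criterion. For real $a_m,b_m$ the quotient $\frac{a_q u+b_q w}{a_p u + b_p w}$ is real iff $\Im\{(a_q u+b_q w)\overline{(a_p u+b_p w)}\}=0$; expanding and discarding the real terms $\vert u\vert^2$ and $\vert w\vert^2$ collapses this to $(a_q b_p - a_p b_q)\,\Im\{u\overline{w}\}=0$. Thus the ratio is non-real exactly when $a_p b_q \neq a_q b_p$ and $\Im\{\hx_1\hx_3\overline{\hx_2^2}\}\neq 0$. The latter is the nonvanishing of one fixed nonzero real polynomial in the real and imaginary parts of $\hx_1,\hx_2,\hx_3$ (it is nonzero, for instance at $\hx_1=\hx_3=1,\ \hx_2=e^{\I\pi/4}$, where $\hx_1\hx_3\overline{\hx_2^2}=-\I$), so it holds off a proper algebraic set, i.e.\ generically.

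The crux is therefore the combinatorial claim $a_p b_q \neq a_q b_p$, and this is the step I expect to be the main obstacle. Writing $c_m:=\cos\theta_m$ and $d_m:=\cos 2\theta_m = 2c_m^2-1$, so that $a_m=\frac{d_m-c_m}{d_m}$ and $b_m=\frac{d_m-1}{2 d_m}$, I would clear denominators and simplify the resulting numerator to $2(c_p-c_q)(1-c_p)(1-c_q)$, yielding the key identity $a_p b_q - a_q b_p = \frac{(c_p-c_q)(1-c_p)(1-c_q)}{d_p d_q}$. Since $c_m=1$ only for $m\equiv 0$, and since for distinct $p,q\in\{1,\dots,r-1\}$ one has $c_p=c_q$ only when $p+q=r$, this quantity is nonzero precisely for distinct $p,q$ with $p+q\neq r$ (and $d_p,d_q\neq 0$). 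Verifying that factorization is the only genuine computation; everything else is bookkeeping.

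Finally I would settle the existence of an admissible pair and package the genericity. The excluded indices are exactly those with $\omega^{4m}=-1$, equivalently $d_m=0$, i.e.\ $m=r(2\ell+1)/8$; there are at most four of these, and none at all unless $8\mid r$. For $r\geq 4$ one can thus always choose distinct $p,q\in\{1,\dots,r-1\}$ with $p+q\neq r$ and $d_p,d_q\neq 0$ (for example $(p,q)=(1,2)$ whenever $d_1,d_2\neq 0$, falling back to another admissible pair otherwise). For this fixed pair, $a_p b_q - a_q b_p$ is a nonzero constant, so by the reality criterion the ratio $\frac{v_0-v_q}{v_0-v_p}$ fails to be real off the proper algebraic set $\{\Im\{\hx_1\hx_3\overline{\hx_2^2}\}=0\}$, which is exactly the generic conclusion of the lemma.
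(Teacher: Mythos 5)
Your proof is correct and follows the same basic strategy as the paper: write $v_0-v_m$ as a real-linear combination of $u=\hx_1\hx_3$ and $w=\hx_2^2$, multiply by the conjugate of the denominator, and reduce non-reality of the ratio to the nonvanishing of $\Im\{u\overline{w}\}$ times a determinant of real coefficients. The one substantive difference is that you actually verify the determinant is nonzero: the paper stops at the observation that the reality condition is a quartic polynomial in the components of $\hx_1,\hx_2,\hx_3$ and concludes genericity, leaving implicit the fact that this polynomial is not identically zero (equivalently, that $(\alpha_{0,1}-\alpha_{p,1})(\alpha_{0,2}-\alpha_{q,2})\neq(\alpha_{0,2}-\alpha_{p,2})(\alpha_{0,1}-\alpha_{q,1})$). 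Your closed-form identity $a_pb_q-a_qb_p=\frac{(c_p-c_q)(1-c_p)(1-c_q)}{d_pd_q}$ checks out (I verified the algebra), and it is the right computation: it makes explicit exactly why the hypotheses $p\neq q$ and $p+q\neq r$ are needed, since $c_p=c_q$ precisely when $p\equiv\pm q\pmod r$, mirroring the paper's remark that $\alpha_{m,i}=\alpha_{r-m,i}$. Your accounting of the excluded indices ($d_m=0$ iff $\omega^{4m}=-1$, occurring only when $8\mid r$ and then for four values of $m$) and the resulting existence of an admissible pair for $r\geq 4$ is also sound. In short, your argument is a completed version of the paper's proof rather than a different one.
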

\begin{proof}
	See Section~\ref{sec:lemma.ratios}.
\end{proof}
\noindent Thus, by Lemma~\ref{lem:norm_linear_comb} we conclude that if a solution for~\eqref{eq:x4} exists, then it is unique. When $r = 3$, the system above provides only two distinct equations since the FROG trace of  $m=1$ and $m=2$ is the same; see Remark~\ref{ren:reflect}. 
 However, if $\vert \hx_4\vert $ is known, then we get a third equation $\vert z\vert = \vert \hx_4\vert $ and 
a similar application of Lemma~\ref{lem:norm_linear_comb} shows that $\hx_4$ is uniquely determined. 

Recall that currently we have two potential candidates for $\hx_3$. However, we show in Section~\ref{sec:supporting_lemma_x3} that if we replace $\hx_3$ by $\overline{\hx_3}e^{2\I\theta}$ where $\theta = \arg\left(\hx_2\right) $ then for generic signals the system of equations~\eqref{eq:x4} has no solution. Therefore, we conclude that we can fix $\hx_0,\hx_1,\hx_2,\hx_3,\hx_4$ up to trivial ambiguities. 

The final step of the proof is to show that given $\hx_0,\hx_1,\dots,\hx_k$ for some $k\geq 4$, we can determine $\hx_{k+1}$ up to symmetries. For an even $k=2s$ we get the system of equations for $m=0,\dots,r-1,$
\begin{equation*}
 \left\vert\frac{\hy_{k+1,m}}{1+\omega^{m(k+1)}} \right\vert  = \frac{1}{N} \left\vert  z + \frac{\omega^m}{1+\omega^{m(k+1)}}\hx_1\hx_k+\dots+\frac{\omega^{ms}}{1+\omega^{m(k+1)}}\hx_s\hx_{s+1}   \right\vert,
\end{equation*}
where again we omit the case $\omega^{m(k+1)}=-1$. To invoke Lemma~\ref{lem:norm_linear_comb}, we need three equations. In most cases, one can simply choose $m=0,1,2$. If one of these values violate the condition $\omega^{m(k+1)}\neq -1$, then we replace it with larger values of $m$. For $k=2s+1$ odd, we obtain the system
\begin{equation*}
\left\vert\frac{\hy_{k+1,m}}{1+\omega^{m(k+1)}} \right\vert  = \frac{1}{N} \left\vert  z + \frac{\omega^m}{1+\omega^{m(k+1)}}\hx_1\hx_k+\dots+\frac{\omega^{m(s+1)}}{1+\omega^{m(k+1)}}\hx_{s+1}^2   \right\vert.
\end{equation*}
Let us assume that $k=2s$ is even and denote $v_m = \frac{\omega^m}{1+\omega^{m(k+1)}}\hx_1\hx_k+\dots+\frac{\omega^{ms}}{1+\omega^{m(k+1)}}\hx_s\hx_{s+1}$.
If $r=N/L\geq 4$, then the same argument used in the proof of Lemma~\ref{lemma.ratios}  shows that for generic values of $\hx_0, \hx_1, \ldots , \hx_k,$ the ratio
${v_0 -v_p}\over{v_0 -v_q}$
will not be real for some distinct values of $p,q$ with $p + q \neq r$. Therefore,
  the system has a unique solution by Lemma~\ref{lem:norm_linear_comb}. A similar statement holds for $k=2s+1$ odd.

When $r=3$, the system provides only two distinct equations; see Remark~\ref{ren:reflect}. If in addition we assume that knowledge of $\vert \hx\vert $, then we have an additional equation $\vert z\vert = \vert \hx_{k+1}\vert $, ensuring a unique solution.

\subsection{Example: Determining $\hx_4$ given  $\hx_0, \hx_1, \hx_2, \hx_3$}

To illustrate the method, we describe in more detail the terms used to determine $\hx_4 = a_4+\I b_4$
from $\hx_0=1, \hx_1, \hx_2, \hx_3$.
We are given the following information in a ``pyramid form'' from which we must determine the unknowns
$a_4, b_4$:

$$\begin{array}{c} | (a_4 + b_4 \I) + (a_1 + b_1\I)(a_3+b_3\I) + (a_2
+ b_2 \I)^2 + (a_3 + b_3\I) (a_1 + b_1 \I) + (a_4 +b_4\I)|\\
| (a_4 + b_4 \I) + \omega (a_1 + b_1\I)(a_3+b_3\I) + \omega^2(a_2 + b_2 \I)^2
+ \omega^3 (a_3 + b_3\I) (a_1 + b_1 \I) + \omega^4(a_4 +b_4\I)|\\
\ldots\\
| (a_4 + b_4 \I) + \omega^{r-1} (a_1 +
b_1\I)(a_3+b_3\I) + \omega^{2(r-1)}(a_2 + b_2 \I)^2 +
\omega^{3(r-1)} (a_3 + b_3\I) (a_1 + b_1 \I) + \omega^{4r-4}(a_4
+b_4\I)|.
\end{array}
$$
As we have done throughout the proof, we rearrange the terms as
$$\begin{array}{c}
| (a_4 + b_4 \I) + (a_1 + b_1\I)(a_3 + b_3\I) +  \frac{1}{2}(a_2+b_2\I)^2|\\
| (a_4 + b_4\I) + {(\omega + \omega^3)\over{1 + \omega^4}} (a_1 + b_1\I)(a_3 + b_3\I) + {\omega^2\over{1 + \omega^4}} (a_2 + b_2\I)^2|\\
\ldots\\
| (a_4 + b_4\I) + {(\omega^{r-1} + \omega^{3r-3)})\over{1 + \omega^{4r-4}}} (a_1 + b_1\I)(a_3 + b_3\I) + {\omega^{2r-2}\over{1 + \omega^{4r-4}}} (a_2 + b_2\I)^2|,
\end{array}
$$
for any $m$ satisfying $\omega^{4m}\neq -1$.
Suppose that $r = N/L \geq 4$ and $\omega^4,\omega^8\neq -1$ so we can choose the terms associated with $m=0,1,2$. If the ratio
$$\left(\hx_1\hx_3 + \frac{1}{2} \hx_2^2- {(\omega + \omega^3)\over{1 + \omega^4}} \hx_1 \hx_3 - {\omega^2\over{1 + \omega^4}} \hx_2^2\right)/\left(
\hx_1\hx_3 + \frac{1}{2} \hx_2^2- {(\omega^2 + \omega^6)\over{1 + \omega^8}} \hx_1 \hx_3 - {\omega^4\over{1 + \omega^8}} \hx_2^2 \right)$$
is not real, 
then $\hx_4 = a_4 +  \I b_4$ is uniquely determined by the 3 real numbers
$$\begin{array}{c}| (a_4 + b_4 \I) + (a_1 + b_1\I)(a_3 + b_3\I) +  \frac{1}{2}(a_2+b_2\I)^2|\\
| (a_4 + b_4\I) + {(\omega + \omega^3)\over{1 + \omega^4}} (a_1 + b_1\I)(a_3 + b_3\I) + {\omega^2\over{1 + \omega^4}} (a_2 + b_2\I)^2|\\
| (a_4 + b_4\I) + {(\omega^2 + \omega^6)\over{1 + \omega^8}} (a_1 + b_1\I)(a_3 + b_3\I) + {\omega^4\over{1 + \omega^8}} (a_2 + b_2\I)^2|. \end{array}$$
If $\omega^4 = -1$ then $\omega = e^{2\pi \I/8}$ and one can determine $a_4, b_4$ from the terms corresponding to $m=0,2,3$:
$$\begin{array}{c}| (a_4 + b_4 \I) + (a_1 + b_1\I)(a_3 + b_3\I) +  \frac{1}{2}(a_2+b_2\I)^2|\\
| (a_4 + b_4\I) + {(\omega^2 + \omega^6)\over{1 + \omega^8}} (a_1 + b_1\I)(a_3 + b_3\I) + {\omega^4\over{1 + \omega^8}} (a_2 + b_2\I)^2|\\
| (a_4 + b_4\I) + {(\omega^4 + \omega^{12})\over{1 + \omega^{16}}} (a_1 + b_1\I)(a_3 + b_3\I) + {\omega^8\over{1 + \omega^{16}}} (a_2 + b_2\I)^2|.
\end{array}$$
By directly substituting $\omega$, these terms reduce to:
$$\begin{array}{c}| (a_4 + b_4 \I) + (a_1 + b_1\I)(a_3 + b_3\I) +  \frac{1}{2}(a_2+b_2\I)^2|\\
| (a_4 + b_4\I)  -\frac{1}{2} (a_2 + b_2\I)^2|\\
| (a_4 + b_4\I) - (a_1 + b_1\I)(a_3 + b_3\I) + \frac{1}{2} (a_2 + b_2\I)^2|.
\end{array}$$
Likewise if $\omega^8 = -1$ then $\omega = e^{2\pi \I/16}$ and we determine
$a_4 + b_4 \I$ from the three real numbers corresponding to $m=0,1,4$:
$$\begin{array}{c}| (a_4 + b_4 \I) + (a_1 + b_1\I)(a_3 + b_3\I) +  \frac{1}{2}(a_2+b_2\I)^2|\\
| (a_4 + b_4\I) + {(\omega + \omega^3)\over{1 + \omega^4}} (a_1 + b_1\I)(a_3 + b_3\I) + {\omega^2\over{1 + \omega^4}} (a_2 + b_2\I)^2|\\
| (a_4 + b_4\I) + {(\omega^4 + \omega^{12})\over{1 + \omega^{16}}} (a_1 + b_1\I)(a_3 + b_3\I) + {\omega^8\over{1 + \omega^{16}}} (a_2 + b_2\I)^2|.
\end{array}$$

When $r = 3$, $\omega^3 =1$ and we only obtain two distinct numbers for $m=0,1$:
$$ \begin{array}{c} | (a_4 + b_4 \I) + (a_1 + b_1\I)(a_3 + b_3\I) +  \frac{1}{2}(a_2+b_2\I)^2|\\
| (a_4 + b_4\I) +  (a_1 + b_1\I)(a_3 + b_3\I) + {\omega^2\over{1 + \omega}} (a_2 + b_2\I)^2|. \end{array}$$
In
this case, there are two possible solutions. However, if we also
assume that we know $|\hx|$ then we have a third piece of information to
uniquely determine $\hx_4$.


\section{Proofs of supporting results} \label{sec:proof_supporting_lemmas}

\subsection{On the  translation symmetry for bandlimited signals} \label{sec:global_translation_ambiguity}

The following proposition shows that if the signal is bandlimited, then the translation symmetry is continuous.

\begin{proposition} \label{eq:pros_symmetry}
	Suppose that $x$  is a B-bandlimited signal with $B\leq N/2$. Assume without loss of generality that  $\hx_{B}=\ldots=\hx_{N-1}=0$. Then, for any $\mu=e^{\I\psi}$ for some $\psi\in[0,2\pi)$, any signal with Fourier transform
	\begin{equation*}
	[\hat x_0,  \mu\hat x_1, \mu^2\hat x_2,\dots,\mu^{B-1}\hat x_{B-1},0,\dots,0  ],
	\end{equation*}
	has the same FROG trace~\eqref{eq:frog_trace} as $x$.	
	
	\begin{proof}
		Under the bandlimit assumption, \rev{we can substitute $p=\ell$ and $q=k-\ell$ and} write~\eqref{eq:y_hat}   as 
		\begin{equation*}
		\hat{y}_{k,m} = {\frac{1}{N}} \sum_{\substack{p+q=k	 \\ 0\leq p,q\leq  N/2-1} } \hx_{p}\hx_{q}e^{2\pi\I p mL/N}.
		\end{equation*}
		Now, if  $\hat{x}_p$ is replaced by $\mu^p\hat{x}_p$ and $\hat{x}_q$ is replaced by  $\mu^q\hat{x}_q$ then $\hat{y}_{k,m}$ is replaced by $\mu^k\hat{y}_{k,m}$. Hence, the absolute value of $\hat{y}_{k,m}$ remains unchanged. \rev{Without the bandlimit assumption,  $q=(k-\ell)\mod N$ and thus $\vert \hat{y}_{k,m}\vert$ is changed unless $\mu$ is the $N$th root of unity.}
	\end{proof}
\end{proposition}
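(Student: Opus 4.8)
The plan is to work directly from the reformulated FROG expression~\eqref{eq:y_hat} and show that the prescribed modulation of the Fourier coefficients multiplies each $\hy_{k,m}$ by a single unimodular constant that depends only on $k$; since the FROG trace records $|\hy_{k,m}|^2$, such a global phase is invisible. Concretely, I would write $\hx^\mu_p=\mu^p\hx_p$ for $0\le p\le B-1$ (and $\hx^\mu_p=0$ otherwise), substitute into~\eqref{eq:y_hat}, and track how the factors $\mu^p$ accumulate across the convolution sum.

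The crux of the argument is a reindexing of the sum $\sum_{\ell}\hx_\ell\hx_{k-\ell}\omega^{\ell m}$ by the pair $p=\ell$, $q=(k-\ell)\bmod N$. A term survives only if both $\hx_p$ and $\hx_q$ are nonzero, i.e. only if $p,q\in\{0,\ldots,B-1\}$. The key claim --- and the only place the hypothesis $B\le N/2$ enters --- is that every surviving term satisfies $p+q=k$ as integers, with no wraparound. To establish this I would split on whether $\ell\le k$ or $\ell>k$: in the first case $q=k-\ell$ is already a genuine integer and $p+q=k$; in the second case $q=k-\ell+N\ge N-(B-1)>B-1$ (using $B\le N/2$), so $\hx_q=0$ and the term drops out. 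Hence the sum collapses to $\hy_{k,m}=\frac{1}{N}\sum_{p+q=k,\,0\le p,q\le B-1}\hx_p\hx_q\,\omega^{pm}$, a sum over honest integer partitions of $k$, matching the expression in the statement.

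Given this reduction the conclusion is immediate. Replacing $\hx_p$ by $\mu^p\hx_p$ and $\hx_q$ by $\mu^q\hx_q$ multiplies the $(p,q)$ summand by $\mu^{p+q}=\mu^k$, a factor independent of the summation index, so it factors out: $\hy^\mu_{k,m}=\mu^k\hy_{k,m}$. Because $|\mu|=1$ we get $|\hy^\mu_{k,m}|=|\hy_{k,m}|$ for every $k$ and $m$, which is exactly the assertion that the modulated signal shares the FROG trace of $x$.

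I expect the reindexing to be the sole genuine obstacle: everything downstream is bookkeeping, but one must argue carefully that bandlimitedness forces $p+q=k$ over $\mathbb{Z}$ rather than modulo $N$. I would also flag this as the structural reason the symmetry is \emph{continuous} only for bandlimited signals: without the support restriction, some surviving terms obey $p+q=k+N$ and pick up an extra $\mu^N$, so the magnitude is preserved only when $\mu^N=1$ --- recovering precisely the discrete translation ambiguity of Proposition~\ref{prop:ambiguities} rather than the continuous one.
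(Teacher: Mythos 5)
Your proof is correct and follows essentially the same route as the paper: reindex the convolution sum in~\eqref{eq:y_hat} so that every surviving term has $p+q=k$ over the integers, observe that the modulation then contributes a uniform factor $\mu^k$, and conclude that the magnitude is unchanged. The only difference is that you explicitly justify why bandlimitedness with $B\leq N/2$ kills the wraparound terms ($q=k-\ell+N\geq N-B+1>B-1$), a step the paper's proof asserts without detail.
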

%

\subsection{Proof of Lemma~\ref{lem:norm_linear_comb}} \label{sec:proof_lem_norm_linear}
The system of equations~\eqref{eq:sys_eq} can be written explicitly as 
	\begin{align} \label{eq:1}
	\vert z\vert^2 + \vert v_1\vert^2+2\Re\left\{ z\overline{v_1} \right\} &= n_1^2,\nonumber \\
	\vert z\vert^2 + \vert v_2\vert^2+2\Re\left\{ z\overline{v_2} \right\} &= n_2^2, \\
	\vert z\vert^2 + \vert v_3\vert^2+2\Re\left\{ z\overline{v_3} \right\} &= n_3^2. \nonumber
	\end{align}
	Subtracting the  the second and the third equations from the first, we get 
	\begin{align*}
	\Re\left\{ z\left( \overline{v}_1 - \overline{v}_2 \right) \right\} &= \frac{1}{2}\left( n_1^2-n_2^2 + \vert v_2\vert^2 - \vert v_1\vert^2\right), \nonumber\\ 
	\Re\left\{ z\left( \overline{v}_1 - \overline{v}_3 \right) \right\} &= \frac{1}{2}\left( n_1^2-n_3^2 + \vert v_3\vert^2 - \vert v_1\vert^2\right). 
	\end{align*} 
	Let $z = a + \I b$, $\overline{v}_1- \overline{v}_2 = c+ d\I$ and $\overline{v}_1- \overline{v}_3 = e+ f\I$. Then, we obtain a system of two linear equations for two variables:
	\begin{align} \label{eq:2}
	ac - bd &= \frac{1}{2}\left( n_1^2-n_2^2 + \vert v_2\vert^2 - \vert v_1\vert^2\right), \nonumber\\ 
	ae - bf &= \frac{1}{2}\left( n_1^2-n_3^2 + \vert v_3\vert^2 - \vert v_1\vert^2\right). 
	\end{align} 
	This system has a unique solution provided that the vectors $(c,-d)$ and $(e,-f)$ are not proportional. This is equivalent to the assumption   $\Im\left\{ \frac{v_1-v_2}{v_1-v_3}  \right\}\neq 0$.

We now show that for generic $v_1,v_2,v_3$ the system~\eqref{eq:1} has no solutions, so the unique solution to the linear system~\eqref{eq:2} will not be a solution to~\eqref{eq:1}.  If we express $v_k = a_k+\I b_k$, we consider the variety $\mathcal{I}\subset(\mathbb{R}^2)^4$ of tuples 
\begin{equation} \label{eq:tuples}
((a,b),(a_1,b_1),(a_2,b_2),(a_3,b_3)),
\end{equation}
such that $z=a +\I b$ is a solution to~\eqref{eq:1}. 	
Each of the  equations in~\eqref{eq:1} involves a different set of variables, so it imposes an independent condition on the tuples~\eqref{eq:tuples}. It follows that dim$\mathcal{I}\leq 8-3=5$.
In particular, $\mathcal{I}$ has strictly smaller dimension than the $\mathbb{R}^6$ parametrizing all triples
$((a_1,b_1),(a_2,b_2),(a_3,b_3))$. Therefore, the system~\eqref{eq:1} has no solution for generic $v_k$.
Intuitively, this can be seen by noting that the set of $z$ satisfying the equation $\vert z+v_k\vert = n_k$
is a circle of radius $n_k$ centered at $-a_k-\I b_k$ in the complex plane. For generic choices of
centers, three circles of fixed radii $n_1,n_2,n_3$ will have no intersection.
	
	
	For the second part of the lemma, suppose that $v_1,v_2,v_3\in \mathbb{R}$. In this case, \eqref{eq:sys_eq} is simplified to
	\begin{align*}
	\left(a + v_1\right)^2 + b^2 &= n_1^2,  \nonumber \\
	\left(a + v_2\right)^2 + b^2 &= n_2^2, \\
	\left(a + v_3\right)^2 + b^2 &= n_3^2. \nonumber
	\end{align*}
	These equations are invariant under the transformation $(a,b)\mapsto(a,-b)$ so if $z=a+b\I$ is a solution then $\overline{z}=a - b\I$ is a solution as well. Subtracting any pair of equations gives a linear equation in $a$, so there is at most one value of $a$ solving the system. Hence, if the system has a solution, then it has two conjugate solutions. 
	
	\subsection{Proof of Lemma~\ref{lemma.ratios}} \label{sec:lemma.ratios}
	
		Let $\alpha_{m,1} = {\omega^{3m} + \omega^m\over{1 + \omega^{4m}}}$ and
		$\alpha_{m,2} = { \omega^{2m}\over{1 + \omega^{4m}}}$.  As noted above,
		$\alpha_{m,1}$ and $ \alpha_{m,2}$ are real if they are well defined. Additionally, note that $\alpha_{m,i} = \alpha_{r-m,i}$ for $i=1,2$.
		 If $r = N/L
		\geq 4$ then we can find $p,q$ with $p + q \neq r$
		so that $\alpha_{p,1}, \alpha_{p,2}, \alpha_{q,1}, \alpha_{q,2}$ are well defined. For instance, if $\omega^4 \neq -1, \omega^8 \neq -1$  we take $p=1, q=2$.
		Then we can write
		\begin{equation} \label{eq.ratio}
		\frac{v_0 -v_p}{v_0 -v_q} = \frac{( \alpha_{0,1} - \alpha_{p,1})\hx_1 \hx_3 +
			(\alpha_{0,2} - \alpha_{p,2}) \hx_2^2}
		{( \alpha_{0,1} - \alpha_{q,1})\hx_1 \hx_3 + (\alpha_{0,2} - \alpha_{q,2}) \hx_2^2}.
		\end{equation}
		Moreover, the  $\alpha_{m,i}$ are fixed, so this ratio is well defined as long
		as the $\hx_1, \hx_2,\hx_3$ are not solutions to the nonzero quadratic polynomial
		$ ( \alpha_{0,1} - \alpha_{q,1})\hx_1 \hx_3 + (\alpha_{0,2} - \alpha_{q,2}) \hx_2^2$. Now we can multiply the numerator and denominator of~\eqref{eq.ratio} 
		by $\overline{(v_0 - v_q)}$
		to obtain
		$${1\over{|v_0 - v_q|^2}} \left(\beta_{p,1}\beta_{q,1}|\hx_1\hx_3|^2 +
		\beta_{p,2}\beta_{q,2}|\hx_2|^2 +
		\beta_{p,1}\beta_{q,2}\hx_1\hx_3\overline{\hx_2^2} +
		\beta_{p,2}\beta_{q,1}\overline{\hx_1\hx_3}
		\hx_2^2\right),$$
		where $\beta_{s,i} = \alpha_{0,i} - \alpha_{s,i}$ for $s = p,q$ are fixed nonzero real numbers.
		This expression is real only if
		$$\Im\left((\alpha_{0,1}- \alpha_{p,1})(\alpha_{0,2} - \alpha_{q,2})\hx_1\hx_3\overline{\hx_2^2} +
		(\alpha_{0,2}- \alpha_{p,2})(\alpha_{0,1} - \alpha_{q,1})\overline{\hx_1\hx_3}
		\hx_2^2\right)=0.$$
		This condition is a quartic polynomial in the real and complex components of $\hx_1, \hx_2, \hx_3$. Hence for generic signals the left hand side will not be equal to zero.

\subsection{Determining  $\hx_3$ uniquely} \label{sec:supporting_lemma_x3}	

The following lemma shows that given $\hx_1,\hx_2$ and $\hx_4$, $\hx_3$ is determined uniquely from the FROG trace up to symmetries.

\begin{lem}
	For a generic signal, let $\hx_1\in\mathbb{R}$, $\hx_2,\hx_3,\hx_4\in\mathbb{C}$ with $r\geq 4$. Consider the following system of equations $m = 0,\dots,r-1$:
	\begin{equation} \label{eq:3}
	\left\vert z + \frac{\left(\omega^m+\omega^{3m}\right)}{1+\omega^{4m}}\hx_1\hx_3^{'} + \frac{\omega^{2m}}{1+\omega^{4m}}\hx_2^2\right\vert = \left\vert \hx_4 + \frac{\left(\omega^m+\omega^{3m}\right)}{1+\omega^{4m}}\hx_1\hx_3 + \frac{\omega^{2m}}{1+\omega^{4m}}\hx_2^2\right\vert.
	\end{equation}
If $\hx_3^{'}=e^{2\theta\I}\overline{\hx_3}$ and $\theta = \arg\left(\hx_2\right)$, then the system has no solutions. Moreover, if $r=3$, then the system of equations:
\begin{eqnarray*}
\vert z \vert & = & \vert \hx_4\vert  \\
\vert z + \hx_1\hx_3^{'} + \hx_2^2/2 \vert & = & \vert \hx_4+\hx_1\hx_3+  \hx_2^2/2\vert  \\
\left\vert z + \frac{\omega+1}{1+\omega}\hx_1\hx_3^{'} + \frac{\omega^2}{1+\omega}\hx_2^2 \right\vert & = & \left\vert \hx_4 + \frac{\omega+1}{1+\omega}\hx_1\hx_3 + \frac{\omega^2}{1+\omega}\hx_2^2 \right\vert
\end{eqnarray*}
has no solutions.
\end{lem}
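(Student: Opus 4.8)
The plan is to reduce the claim to a single polynomial non-vanishing condition, exploiting the fact that the ``spurious centers'' produced by $\hx_3'$ do not depend on $\hx_4$. Write the $m$-th equation of~\eqref{eq:3} as $|z + v_m'| = n_m$, where
\[
v_m' = \alpha_{m,1}\hx_1\hx_3' + \alpha_{m,2}\hx_2^2, \qquad n_m = \left| \hx_4 + \alpha_{m,1}\hx_1\hx_3 + \alpha_{m,2}\hx_2^2 \right|,
\]
with $\alpha_{m,1} = \frac{\omega^m + \omega^{3m}}{1+\omega^{4m}}$ and $\alpha_{m,2} = \frac{\omega^{2m}}{1+\omega^{4m}}$ the real constants of Lemma~\ref{lemma.ratios}. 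First I would apply Lemma~\ref{lemma.ratios} with $\hx_3$ replaced by $\hx_3' = e^{2\theta\I}\overline{\hx_3}$; since this substitution is algebraic in $\hx_2,\hx_3$ and the conclusion of that lemma is a Zariski-open condition, for generic signals there are indices $p,q$ with $p+q\neq r$ for which $\frac{v_0' - v_p'}{v_0' - v_q'}$ is not real. By the first part of Lemma~\ref{lem:norm_linear_comb}, the three equations $m\in\{0,p,q\}$ have at most one common solution, obtained by solving the two real-linear difference equations; call it $z^\ast$. The whole system~\eqref{eq:3} then admits a solution only if $z^\ast$ satisfies the remaining quadratic equation $|z^\ast + v_0'|^2 = n_0^2$.

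The crucial structural point is that the centers $v_0',v_p',v_q'$ are independent of $\hx_4$, whereas the radii $n_m$ vary with $\hx_4$. So I would fix $\hx_1,\hx_2,\hx_3$ generic and regard everything as a function of $\hx_4\in\mathbb{R}^2$. Subtracting the spurious equations pairwise and using $n_0^2 - n_p^2 = 2\Re\{\hx_4(\overline{v_0} - \overline{v_p})\} + |v_0|^2 - |v_p|^2$, with $v_m$ the \emph{true} centers, shows that $z^\ast = T(\hx_4) + z_0$ is affine in $(\Re\hx_4,\Im\hx_4)$, where the linear map $T$ is determined by $\langle T(\hx_4),\, v_0'-v_p'\rangle = \langle \hx_4,\, v_0-v_p\rangle$ and the analogous equation for $q$ (here $\langle z,w\rangle = \Re\{z\overline{w}\}$), and $T$ is invertible precisely by the non-real ratio condition. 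Consequently the consistency condition
\[
g(\hx_4) := |z^\ast(\hx_4) + v_0'|^2 - |\hx_4 + v_0|^2
\]
is a polynomial of degree at most two in $(\Re\hx_4,\Im\hx_4)$ whose quadratic part equals $|T(\hx_4)|^2 - |\hx_4|^2$. Hence $g\not\equiv 0$ as soon as $T$ fails to be an isometry of $\mathbb{R}^2$. For the \emph{true} centers one has $v_m'=v_m$, so $T=\mathrm{Id}$ and $g\equiv 0$ — this is exactly why the correct $\hx_3$ is consistent; the content of the lemma is that the reflection substitution $\hx_3\mapsto e^{2\theta\I}\overline{\hx_3}$ destroys this isometry.

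To finish I would show that for generic $\hx_1,\hx_2,\hx_3$ the map $T$ is not orthogonal, equivalently that the entries of $T^{\top}T-\mathrm{Id}$ — rational functions of the real and imaginary parts of the centers, hence of $\hx_1,\hx_2,\hx_3$ — do not all vanish identically. Since ``$T$ orthogonal'' is a closed algebraic condition, it suffices to exhibit one explicit choice (for example $r=4$, $\hx_1=1$, and small generic $\hx_2,\hx_3$) for which $T^{\top}T\neq \mathrm{Id}$; a direct finite computation then certifies $g\not\equiv 0$, so the vanishing locus of $g$ is a proper subvariety and the spurious system~\eqref{eq:3} has no solution for generic signals. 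The case $r=3$ is handled identically, taking the three equations to be the power-spectrum constraint $|z|=|\hx_4|$ (center $0$) together with the $m=0,1$ equations; here $\tfrac{\omega+1}{1+\omega}=1$ and $\tfrac{\omega^2}{1+\omega}=-1$, the three centers are $0,\ \hx_1\hx_3'+\tfrac12\hx_2^2,\ \hx_1\hx_3'-\hx_2^2$, their ratio is generically non-real, and the same $g\not\equiv 0$ argument applies.

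The main obstacle is precisely the non-vanishing of $g$. Because the radii $n_m$ and the centers $v_m'$ are built from the \emph{same} parameters, one cannot invoke the clean ``fixed radii, generic centers'' argument used in Lemma~\ref{lem:norm_linear_comb}: the data and the circles are correlated. The decoupling above — freezing $\hx_1,\hx_2,\hx_3$ and varying only $\hx_4$, which leaves the centers fixed — is what makes the problem tractable, reducing it to whether the induced linear map $T$ is an isometry. Establishing that it is generically \emph{not} an isometry is where I expect to spend the real effort, and where an explicit witness computation, rather than a soft dimension count, seems unavoidable.
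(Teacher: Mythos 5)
Your skeleton is essentially the paper's: assume $\hx_1\neq 0$ and normalize, use pairwise differences of three of the phaseless equations to pin down a unique candidate $z^\ast$ (this is where the non-real-ratio condition of Lemmas~\ref{lem:norm_linear_comb} and~\ref{lemma.ratios}, pulled back along the involution $\hx_3\mapsto e^{2\I\theta}\overline{\hx_3}$, enters), substitute $z^\ast$ into the remaining quadratic equation, and show the resulting consistency polynomial is not identically zero. Your refinement is genuinely nice and correct: freezing $\hx_1,\hx_2,\hx_3$ so that the spurious centers are fixed, observing that $z^\ast$ is affine in $\hx_4$ with linear part $T$, and identifying the quadratic part of $g$ with $\vert T(\hx_4)\vert^2-\vert\hx_4\vert^2$ gives a conceptual reason why the true $\hx_3$ is consistent ($T=\mathrm{Id}$) and reduces the lemma to "$T$ is not an isometry." In fact the paper's computation is exactly an instance of your criterion: its coefficient of $a_4^2$ in $F(0,1,1,1,a_4,b_4)$ is the $(1,1)$ entry of $T^{\top}T-\mathrm{Id}$ at the witness $\hx_1=1,\ \hx_2=\I,\ \hx_3=1+\I$.

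The gap is that you never produce the non-vanishing certificate; you only assert that "a direct finite computation" at a witness would supply it. That computation is the entire nontrivial content of the lemma --- everything preceding it is formal --- and it is what the paper actually carries out: at the witness above the coefficient of $a_4^2$ equals $4096 \csc(8t)^2\sin(t/2)^8\sin(t)^2\sin(2t)^2(\sin t+\sin 2t)^2$ with $\omega=e^{\I t}$, which is nonzero for every admissible $t$ except $t=0$ and $t=2\pi/3$ (with separate choices of $m$ when $\omega^4=-1$ or $\omega^8=-1$). Note also that a single witness at $r=4$, as you suggest, certifies the lemma only for $r=4$, whereas the statement must hold for every $r\geq 4$; you would need either a witness for each $r$ or, as the paper does, one symbolic computation in $\omega$ whose non-vanishing is verified uniformly in $t$. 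The same remark applies to your $r=3$ branch. Until such a certificate is exhibited, you have not excluded the possibility that the reflection substitution leaves $T$ an isometry identically in the signal, in which case your argument would prove nothing.
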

\begin{proof}
	
	The proof is similar to the proof of the first part of Lemma~\ref{lem:norm_linear_comb}. Since $\hx_1$ is generic, we can assume it is nonzero. Let us denote $\hx_k = a_k + \I b_k$ for $k=2,3,4.$ If the system of equations has a solution, then $a_2,b_2,a_3,b_3,a_4,b_4$ satisfy a non-trivial polynomial equation. To see this, we argue the following:
	
	By assuming $\hx_1\neq 0$, we can divide the equations  by $\hx_1$ and then reduce to the case $\hx_1 = 1$. The proof of Lemma~\ref{lem:norm_linear_comb} shows that if a solution $z= a+\I b$ exists, then by considering the differences of equations for three distinct values of $m$, $a$ and $b$ can be expressed as rational functions in the real and complex parts of $\hx_2,\hx_3,\hx_3^{'}$.  Since $\hx_3^{'} = \left(\frac{\hx_2^2}{\vert \hx_2\vert} \right)^2\overline{\hx_3}$, $\Re\left(\hx_3^{'}\right)$ and $\Im\left(\hx_3^{'}\right)$ are also rational functions of $a_2,b_2,a_3,b_3$. Hence, if $z = a+\I b$ is a solution to the system~\eqref{eq:2}, then $a = f(a_2,b_2,a_3,b_3,a_4,b_4)$ and $b = g(a_2,b_2,a_3,b_3,a_4,b_4)$, where $f$ and $g$ are rational functions. 
	 
	 In order for $z=a+\I b$ to be a feasible solution, it must also satisfy the quadratic equation for $m=0$
	 \begin{equation*}
	 \left\vert z + \hx_3^{'}\hx_1 + \hx_2^2/2 \right\vert^2   = \left\vert \hx_4 + \hx_3\hx_1 + \hx_2^2/2 \right\vert^2.
	 \end{equation*}
	Expanding both sides in terms of $(a_2,b_2,a_3,b_3,a_4,b_4)$, we see that the real numbers $(a_2,b_2,a_3,b_3,a_4,b_4)$ must satisfy an explicit polynomial equation $F(a_2,b_2,a_3,b_3,a_4,b_4)=0$. As long as $F\neq 0$, the generic real numbers will not satisfy it. 
	
We are left with showing that indeed $F\neq 0$. 
For example if $\omega^4, \omega^8 \neq -1$  then we can use the equations associated with $m=0,1,2$:
\begin{equation*}
\begin{array}{ccc}
|x + \hx_1 \hx_3' + \hx_2^2/2|^2 & = & |\hx_4 + \hx_1 \hx_3 + \hx_2^2/2|^2 \\
|x + {\omega + \omega^3\over{1 +\omega^4}}   \hx_1 \hx_3'  +
{\omega^2\over{1 + \omega^4}} \hx_2^2|^2 & = &
|\hx_4 + {\omega + \omega^3\over{1 +\omega^4}} \hx_1 \hx_3  +  {\omega^2\over{1 + \omega^4}} \hx_2^2|^2\\
|x + {\omega^2 + \omega^6\over{1 +\omega^8}}  \hx_1 \hx_3'  +
{\omega^4\over{1 + \omega^8}} \hx_2^2|^2 & = &
|\hx_4 + {\omega^2 + \omega^6\over{1 +\omega^8}} \hx_1 \hx_3  +  {\omega^2\over{1 + \omega^4}} \hx_2^2|^2,
\end{array} 
\end{equation*}
to show that $a_2, b_2, a_3, b_3, a_4, b_4$
satisfy a nonzero polynomial $F(a_2, b_2, a_3,b_3, a_4, b_4)$. This is true since 
$F(0,1, 1,1,a_4,b_4)$  (corresponding to $\hx_1=1, \hx_2 = \I, \hx_3 = 1+ \I$)
has coefficient  of $a_4^2$ given by 
$$4096 \csc(8 t)^2 \sin(t/2)^8 \sin(t)^2 \sin(2 t)^2 (\sin(t) + \sin(2 t))^2,$$
where $\omega = \cos t + \I \sin t$ and $\csc$ is the inverse of the sine function. This coefficient is nonzero unless
$t=0$ or $t = 2\pi/3$. 
A similar analysis can be performed in the other cases. For instance, instead of picking $m=0,1,2,$ if $\omega^4 = -1$ one may take $m=0,2,4$ and if $\omega^8 =-1$ then we choose $m=0,1,3$. Hence, we conclude that indeed $F\neq 0$.

If $r=3$, then~\eqref{eq:3} provides only two distinct equations (see Remark~\ref{ren:reflect}). We then use the third constraint $\vert z \vert =\vert\hx_4 \vert $ to derive the same result.

\end{proof}


\section{Conclusion and perspective} \label{sec:conclusion}

FROG is an important tool for  ultra-short laser pulse characterization.
The problem involves a system of phaseless quartic equations that  differs significantly from quadratic systems, appearing in \rev{standard} phase retrieval \rev{problems}.  
In this work, we  analyzed the uniqueness of the FROG method.
We have shown that it is sufficient to take only  $3$B FROG measurements in order to  determine a generic B-bandlimited signal uniquely, up to unavoidable symmetries. If the power spectrum of the sought signal is also available, then $2B$ \rev{FROG} measurements are enough. \rev{Necessary conditions for recovery is an open problem. In addition, this paper did not study   computational aspects.  An important step in this direction is to analyze the properties of current algorithms used by practitioners, like the PCGP.} 

\rev{A natural extension of the FROG model  is called} \emph{blind FROG} \rev{or \emph{blind phaseless STFT}}. In this problem, the acquired data is the Fourier magnitude of  $y_{n,m}=x_n^1x_{n+mL}^2$ for  two signals $x^1,x^2\in\CN$ \cite{trebino2012frequency,wong2012simultaneously}. The goal is then to estimate both signals simultaneously from their phaseless measurements. \rev{Initial results on this model were derived in~\cite{bendory2017uniqueness}, yet we conjecture that they can be further improved by tighter analysis.}



\section*{References}

\bibliographystyle{elsarticle-num}
\bibliography{ref}

\begin{thebibliography}{10}
\expandafter\ifx\csname url\endcsname\relax
  \def\url#1{\texttt{#1}}\fi
\expandafter\ifx\csname urlprefix\endcsname\relax\def\urlprefix{URL }\fi
\expandafter\ifx\csname href\endcsname\relax
  \def\href#1#2{#2} \def\path#1{#1}\fi

\bibitem{millane1990phase}
R.~P. Millane, Phase retrieval in crystallography and optics, JOSA A 7~(3)
  (1990) 394--411.

\bibitem{elser2017benchmark}
V.~Elser, T.-Y. Lan, T.~Bendory, Benchmark problems for phase retrieval, arXiv
  preprint arXiv:1706.00399.

\bibitem{rabiner1993fundamentals}
L.~R. Rabiner, B.-H. Juang, Fundamentals of speech recognition, Vol.~14, PTR
  Prentice Hall Englewood Cliffs, 1993.

\bibitem{baykal2004blind}
B.~Baykal, Blind channel estimation via combining autocorrelation and blind
  phase estimation, IEEE Transactions on Circuits and Systems I: Regular Papers
  51~(6) (2004) 1125--1131.

\bibitem{fienup1987phase}
C.~Fienup, J.~Dainty, Phase retrieval and image reconstruction for astronomy,
  Image Recovery: Theory and Application 231 (1987) 275.

\bibitem{bendory2017bispectrum}
T.~Bendory, N.~Boumal, C.~Ma, Z.~Zhao, A.~Singer, Bispectrum inversion with
  application to multireference alignment, IEEE Transactions on Signal
  Processing 66~(4) (2017) 1037--1050.

\bibitem{shechtman2015phase}
Y.~Shechtman, Y.~C. Eldar, O.~Cohen, H.~N. Chapman, J.~Miao, M.~Segev, Phase
  retrieval with application to optical imaging: a contemporary overview, IEEE
  signal processing magazine 32~(3) (2015) 87--109.

\bibitem{bendory2017fourier}
T.~Bendory, R.~Beinert, Y.~C. Eldar, Fourier phase retrieval: Uniqueness and
  algorithms, in: Compressed Sensing and its Applications, Springer, 2017, pp.
  55--91.

\bibitem{huang2016phase}
K.~Huang, Y.~C. Eldar, N.~D. Sidiropoulos, Phase retrieval from {1D} {Fourier}
  measurements: Convexity, uniqueness, and algorithms, IEEE Transactions on
  Signal Processing 64~(23) (2016) 6105--6117.

\bibitem{jaganathan2013sparse}
K.~Jaganathan, S.~Oymak, B.~Hassibi, Sparse phase retrieval: Convex algorithms
  and limitations, in: Information Theory Proceedings (ISIT), 2013 IEEE
  International Symposium on, IEEE, 2013, pp. 1022--1026.

\bibitem{beinert2015ambiguities}
R.~Beinert, G.~Plonka, Ambiguities in one-dimensional discrete phase retrieval
  from {Fourier} magnitudes, Journal of Fourier Analysis and Applications
  21~(6) (2015) 1169--1198.

\bibitem{candes2015phase}
E.~J. Candes, Y.~C. Eldar, T.~Strohmer, V.~Voroninski, Phase retrieval via
  matrix completion, SIAM review 57~(2) (2015) 225--251.

\bibitem{jaganathan2015phase}
K.~Jaganathan, Y.~C. Eldar, B.~Hassibi, Phase retrieval with masks using convex
  optimization, in: Information Theory (ISIT), 2015 IEEE International
  Symposium on, IEEE, 2015, pp. 1655--1659.

\bibitem{iwen2016phase}
M.~A. Iwen, B.~Preskitt, R.~Saab, A.~Viswanathan, Phase retrieval from local
  measurements: Improved robustness via eigenvector-based angular
  synchronization, arXiv preprint arXiv:1612.01182.

\bibitem{bendory2018non}
T.~Bendory, Y.~C. Eldar, N.~Boumal, Non-convex phase retrieval from {STFT}
  measurements, IEEE Transactions on Information Theory 64~(1) (2018) 467--484.

\bibitem{jaganathan2016stft}
K.~Jaganathan, Y.~C. Eldar, B.~Hassibi, {STFT} phase retrieval: Uniqueness
  guarantees and recovery algorithms, IEEE Journal of selected topics in signal
  processing 10~(4) (2016) 770--781.

\bibitem{eldar2015sparse}
Y.~C. Eldar, P.~Sidorenko, D.~G. Mixon, S.~Barel, O.~Cohen, Sparse phase
  retrieval from short-time {Fourier} measurements, IEEE Signal Processing
  Letters 22~(5) (2015) 638--642.

\bibitem{bojarovska2016phase}
I.~Bojarovska, A.~Flinth, Phase retrieval from gabor measurements, Journal of
  Fourier Analysis and Applications 22~(3) (2016) 542--567.

\bibitem{griffin1984signal}
D.~Griffin, J.~Lim, Signal estimation from modified short-time {Fourier}
  transform, IEEE Transactions on Acoustics, Speech, and Signal Processing
  32~(2) (1984) 236--243.

\bibitem{pfander2016robust}
G.~E. Pfander, P.~Salanevich, Robust phase retrieval algorithm for
  time-frequency structured measurements, arXiv preprint arXiv:1611.02540.

\bibitem{trebino2012frequency}
R.~Trebino, Frequency-resolved optical gating: the measurement of ultrashort
  laser pulses, Springer Science \& Business Media, 2012.

\bibitem{trebino1997measuring}
R.~Trebino, K.~W. DeLong, D.~N. Fittinghoff, J.~N. Sweetser, M.~A.
  Krumb{\"u}gel, B.~A. Richman, D.~J. Kane, Measuring ultrashort laser pulses
  in the time-frequency domain using frequency-resolved optical gating, Review
  of Scientific Instruments 68~(9) (1997) 3277--3295.

\bibitem{delong1994frequency}
K.~DeLong, R.~Trebino, J.~Hunter, W.~White, Frequency-resolved optical gating
  with the use of second-harmonic generation, JOSA B 11~(11) (1994) 2206--2215.

\bibitem{bendory2017uniqueness}
T.~Bendory, P.~Sidorenko, Y.~C. Eldar, On the uniqueness of {FROG} methods,
  IEEE Signal Processing Letters 24~(5) (2017) 722--726.

\bibitem{o2004practical}
P.~O'Shea, S.~Akturk, M.~Kimmel, R.~Trebino, Practical issues in
  ultra-short-pulse measurements with grenouille, Applied Physics B 79~(6)
  (2004) 683--691.

\bibitem{seifert2004nontrivial}
B.~Seifert, H.~Stolz, M.~Tasche, Nontrivial ambiguities for blind
  frequency-resolved optical gating and the problem of uniqueness, JOSA B
  21~(5) (2004) 1089--1097.

\bibitem{conca2015algebraic}
A.~Conca, D.~Edidin, M.~Hering, C.~Vinzant, An algebraic characterization of
  injectivity in phase retrieval, Applied and Computational Harmonic Analysis
  38~(2) (2015) 346--356.

\bibitem{trebino1993using}
R.~Trebino, D.~J. Kane, Using phase retrieval to measure the intensity and
  phase of ultrashort pulses: frequency-resolved optical gating, JOSA A 10~(5)
  (1993) 1101--1111.

\bibitem{sidorenko2016ptychographic}
P.~Sidorenko, O.~Lahav, Z.~Avnat, O.~Cohen, Ptychographic reconstruction
  algorithm for frequency-resolved optical gating: super-resolution and supreme
  robustness, Optica 3~(12) (2016) 1320--1330.

\bibitem{kane2008principal}
D.~J. Kane, Principal components generalized projections: a review, JOSA B
  25~(6) (2008) A120--A132.

\bibitem{boumal2014manopt}
N.~Boumal, B.~Mishra, P.-A. Absil, R.~Sepulchre, Manopt, a matlab toolbox for
  optimization on manifolds, The Journal of Machine Learning Research 15~(1)
  (2014) 1455--1459.

\bibitem{candes2015Wirtinger}
E.~J. Candes, X.~Li, M.~Soltanolkotabi, Phase retrieval via wirtinger flow:
  Theory and algorithms, IEEE Transactions on Information Theory 61~(4) (2015)
  1985--2007.

\bibitem{chen2017solving}
Y.~Chen, E.~J. Cand{\`e}s, Solving random quadratic systems of equations is
  nearly as easy as solving linear systems, Communications on Pure and Applied
  Mathematics 70~(5) (2017) 822--883.

\bibitem{sun2016geometric}
J.~Sun, Q.~Qu, J.~Wright, A geometric analysis of phase retrieval, in:
  Information Theory (ISIT), 2016 IEEE International Symposium on, IEEE, 2016,
  pp. 2379--2383.

\bibitem{wang2017solving}
G.~Wang, G.~B. Giannakis, Y.~C. Eldar, Solving systems of random quadratic
  equations via truncated amplitude flow, IEEE Transactions on Information
  Theory 64~(2) (2018) 773--794.

\bibitem{wong2012simultaneously}
T.~C. Wong, J.~Ratner, V.~Chauhan, J.~Cohen, P.~M. Vaughan, L.~Xu, A.~Consoli,
  R.~Trebino, Simultaneously measuring two ultrashort laser pulses on a
  single-shot using double-blind frequency-resolved optical gating, JOSA B
  29~(6) (2012) 1237--1244.

\end{thebibliography}

\end{document}